\newcommand{\perm}{\operatorname{perm}}
\newcommand{\stc}{\operatorname{sc}}
\theoremstyle{plain}
\newenvironment{claiminproof}[1]{\medskip\par\noindent\underline{Claim:}\space#1}{}
\newenvironment{claimproof}[1]{\begin{quote}\par\noindent\emph{Proof of the Claim:}\space#1}{[\emph{End, Proof of the Claim}]\end{quote}}
\DeclareMathOperator{\lcm}{lcm}
\DeclareFontFamily{U}{bigshuffle}{}
\DeclareFontShape{U}{bigshuffle}{m}{n}{
  <5-8> s*[1.7] shuffle7
  <8->  s*[1.7] shuffle10
}{}
\DeclareSymbolFont{BigShuffle}{U}{bigshuffle}{m}{n}
\DeclareMathSymbol\bigshuffle{\mathop}{BigShuffle}{"001}
\DeclareMathSymbol\bigcshuffle{\mathop}{BigShuffle}{"002}
\newcommand{\upwardclosure}[1]{\ensuremath{\mathop{\uparrow\!} #1}}
\begin{document}
%



\title{State Complexity Investigations on Commutative Languages -- The Upward and Downward Closure, Commutative Aperiodic and Commutative Group Languages} 

\titlerunning{State Complexity on Commutative Aperiodic and Group Languages}

%
%
\author{Stefan Hoffmann\orcidID{0000-0002-7866-075X}}
\authorrunning{S. Hoffmann}
%
\institute{Informatikwissenschaften, FB IV, 
  Universit\"at Trier,  Universitätsring 15, 54296~Trier, Germany, 
  \email{hoffmanns@informatik.uni-trier.de}}
\maketitle              
\begin{abstract}
 We investigate the state complexity of the upward and downward closure
 and interior operations on commutative regular languages.
 Then, we systematically study the state complexity of these operations
 and of the shuffle operation on commutative group languages
 and commutative aperiodic (or star-free) languages.
 
\keywords{finite automata \and state complexity \and shuffle \and upward and downward closure \and commutative languages \and group languages \and aperiodic languages \and star-free languages} 
\end{abstract}

\section{Introduction}
\label{sec:introduction}

The state complexity, as used here, of a regular language $L$
is the minimal number of states needed in a complete deterministic automaton
recognizing~$L$. The state complexity of an operation
on regular languages is the greatest state complexity
of the result of this operation
as a function of the (maximal) state complexities of its arguments.

Investigating the state complexity of the result of a regularity-preserving operation on regular languages,
see~\cite{GaoMRY17} for a survey, was first initiated by Maslov in~\cite{Mas70} and systematically started by Yu, Zhuang \& Salomaa in~\cite{YuZhuangSalomaa1994}.

A language is called commutative, if
for any word in the language, every permutation of this word is also in the language.
The class of commutative automata, which recognize commutative regular languages, was introduced in~\cite{BrzozowskiS73}.


The shuffle operation has been introduced to understand the semantics of parallel programs~\cite{CamHab74,DBLP:conf/mfcs/Marzurkiewicz75,DBLP:journals/cl/Riddle79,Shaw78zbMATH03592960}.
The shuffle operation 
is regularity-preserving on all regular languages. The state complexity of the shuffle
operation in the general cases was investigated in~\cite{BrzozowskiJLRS16}
for complete deterministic automata and in~\cite{DBLP:journals/jalc/CampeanuSY02}
for incomplete deterministic automata. The bound $2^{nm-1} + 2^{(m-1)(n-1)}(2^{m-1}-1)(2^{n-1}-1)$ was obtained
in the former case, which is not known to be tight in case of complete automata, and the tight bound $2^{nm}-1$
in the latter case. 

A word is a (scattered) subsequence of another word, if it can be obtained from the latter word by deleting
letters. This gives a partial order, and the upward and downward closure and interior operations, denoted by $\mathop{\uparrow\!} U$,
$\mathop{\downarrow\!} U$,
$\mathop{\uptodownarrow\!} U$
and $\mathop{\downtouparrow\!} U$,
refer to this partial order. Languages that result from upward closure operation are also known as shuffle ideals.
The state complexity of these operations was investigated in~\cite{DBLP:journals/tcs/GruberHK07,DBLP:journals/fuin/GruberHK09,DBLP:journals/ita/Heam02,KarandikarNS16,DBLP:journals/fuin/Okhotin10}


In~\cite{Hoffmann2021NISextended,DBLP:conf/cai/Hoffmann19,DBLP:conf/dcfs/Hoffmann21,Hoffmann2021} the state complexity of these operations
was investigated in the case of commutative regular languages.
The results are summarized in Table~\ref{tab:sc_known_results}.


\begin{table}[ht]
    \centering 
    \begin{tabular}{|c|c|c|c|} 
    \hline
    Operation                                    &  Upper Bound        &  Lower Bound              & Reference \\ \hline 
    $\pi_{\Gamma}(U)$, $\Gamma \subseteq \Sigma$ & $n$                 &  $n$  &  \cite{Hoffmann2021NISextended,Hoffmann2021} \\
    $U \shuffle V$ & $\min\{(2nm)^{|\Sigma|},f(n,m)\}$ & $nm$ &  \cite{BrzozowskiJLRS16,Hoffmann2021NISextended,DBLP:conf/cai/Hoffmann19} \\
    $\uparrow\! U$                               & $\min\{n^{|\Sigma|},2^{n-2}+1\}$ & $\Omega\left( \left( \frac{n}{|\Sigma|} \right)^{|\Sigma|} \right)$ & Thm.~\ref{thm:sc_upper_bound_closure_interior} \& \cite{DBLP:journals/ita/Heam02,Hoffmann2021,KarandikarNS16} \\
    $\downarrow\! U$                             & $\min\{n^{|\Sigma|},2^{n-1}\}$ & $n$ &  Thm.~\ref{thm:sc_upper_bound_closure_interior} \&\cite{Hoffmann2021,KarandikarNS16} \\
    $\uptodownarrow\! U$                         & $\min\{n^{|\Sigma|},2^{n-2}+1\}$ & $\Omega\left( \left( \frac{n}{|\Sigma|} \right)^{|\Sigma|} \right)$ & Thm.~\ref{thm:sc_upper_bound_closure_interior} \& \cite{Hoffmann2021,KarandikarNS16} \\
    $\downtouparrow\! U$                         & $\min\{n^{|\Sigma|},2^{n-1}\}$ & $n$ & Thm.~\ref{thm:sc_upper_bound_closure_interior} \& \cite{Hoffmann2021,KarandikarNS16} \\
    $U \cup V$,$U \cap V$                        & $nm$               & sharp, for any $\Sigma$ & \cite{Hoffmann2021NISextended,Hoffmann2021} \\ \hline
    \end{tabular}
    \caption{State complexity results on commutative regular languages, where $n$
    and $m$ denote the state complexities of the input languages.
     Also, $f(n,m) =  2^{nm-1} + 2^{(m-1)(n-1)}(2^{m-1}-1)(2^{n-1}-1)$
     is the general bound for shuffle from~\cite{BrzozowskiJLRS16}.} 
    \label{tab:sc_known_results}
\end{table} 

\todo{upward und downward closure hier annoucnen und in obere tabelle eintragen.}

\todo[inline]{
 star-free paper Brzozowski
 schnitt vereinigung für binär scharf, aber 
 für mehr weiß ich nicht, ob diekonstruktin so geht. z.B. zustand (4,5) wenn man ein c hat, werden die alle zu einem zustand?
 oder

 das mit den alphabeten erstmal nicht großartig thematisieren.
 
 sprechweise quotient complexity for complete state complexity?
}

\todo{$|\Sigma|$ durch $k$ ersetzen, oder andersrum?}

\todo{referenzen einträge über mehrere zeilen aufteilen?}

\todo{sharpness fuer beide parameter vergleichen.}

\renewcommand{\arraystretch}{1.2}
\begin{table}[ht]
    \centering 
    \begin{tabular}{|c|c|c|c|c|>{\raggedright\arraybackslash}p{2.2cm}|} 
    \hline
     & \multicolumn{2}{c|}{Group Case} & \multicolumn{2}{c|}{Aperiodic Case} & \\ \hline
    Op.            &  $\le$   &  $\ge$  & $\le$  & $\ge$  & Reference \\ \hline  
    $\pi_{\Gamma}(U)$ & $n$   &  $n$    & $n$    & $n$    & \cite{Hoffmann2021} \\ \hline
    $U \shuffle V$        & $(nm)^{|\Sigma|}$         & $nm$ & $(n + m - 1)^{|\Sigma|}$ & $\left\{ \begin{array}{ll} \Omega\left( nm \right) & \mbox{if } |\Sigma| > 1 \\ 
    n + m - 1 & \mbox{if } |\Sigma| = 1 \end{array}\right.$ &   
    Thm.~\ref{thm:group:shuffle_upper_bound}, 
    Prop.~\ref{prop:group:shuffle_lower_bound}, \cite{BrzozowskiL12,Hoffmann2021NISextended,DBLP:conf/cai/Hoffmann19} \& Prop.~\ref{prop:aperiodic:shuffle_lower_bound} 
    \\ \hline
    $\uparrow\! U$        & $n^{|\Sigma|}$ & n & $\min\{n^{|\Sigma|},2^{n-2}+1\}$ & $\Omega\left( \left( \frac{n}{|\Sigma|} \right)^{|\Sigma|} \right)$ & Thm.~\ref{thm:sc_upper_bound_closure_interior}, Prop.~\ref{prop:group:upward_closure_lower_bound}, Prop.~\ref{prop:aperiodic:upward_lower_bound} \& \cite{KarandikarNS16} \\ \hline
    $\downarrow\! U$      & $1$ & $1$ & $\min\{n^{|\Sigma|},2^{n-1}\}$    & n &  Thm.~\ref{thm:sc_upper_bound_closure_interior}, Prop.~\ref{prop:group:downward_upper_bound}, Prop.~\ref{thm:aperiodic:interior_lower_bound} \& \cite{KarandikarNS16} \\ \hline
    $\uptodownarrow\! U$  & $n^{|\Sigma|}$ & $n$ &  $\min\{n^{|\Sigma|},2^{n-2}+1\}$ & $\Omega\left( \left( \frac{n}{|\Sigma|} \right)^{|\Sigma|} \right)$ & Eq.~\eqref{eqn:interior} \\ \hline
    $\downtouparrow\! U$  & $1$ & $1$                     & $\min\{n^{|\Sigma|},2^{n-1}\}$ & $n$ & Eq.~\eqref{eqn:interior} \\ \hline
    \parbox{1cm}{\centering $U \cap V$ \\ $U\cup V$}            & $nm$           & $nm$ &  
    $\left\{\begin{array}{ll}
     nm & |\Sigma| \ge 2 \\
     \max\{n,m\} & |\Sigma| = 1
    \end{array}\right.$ & $\left\{\begin{array}{ll}
     nm & |\Sigma| \ge 2 \\
     \max\{n,m\} & |\Sigma| = 1
    \end{array}\right.$  & Thm~\ref{thm:group:union_intersection} \& \cite{BrzozowskiL12}\\ \hline
    \end{tabular} 
    \caption{The state complexity results for various operations for input languages of state complexities $n$ and $m$. The upper bound ($\le$)
     and the best known lower bound ($\ge$) are indicated for the group 
     and the aperiodic commutative languages.
     Also, $f(n,m) =  2^{nm-1} + 2^{(m-1)(n-1)}(2^{m-1}-1)(2^{n-1}-1)$
     is the general bound for shuffle from~\cite{BrzozowskiJLRS16}. So, the bound for shuffle
     is actually the minimum of the bound stated and $f(n,m)$ as written in Table~\ref{tab:sc_known_results} (this is left out to save horizontal space). For the lower bound for projection,
     consider the group language $(a^n)^* \shuffle \Sigma\setminus\{a\}$
     and the aperiodic language $\{ a^{n-1} \} \shuffle \Sigma\setminus\{a\}$ and $\Gamma \subseteq \Sigma \setminus \{a\}$.}
    \label{tab:sc_results_here}
\end{table} 

A \emph{group language} is a language recognizable by an automaton where every letter induces a permutation
of the state set. The investigation of the state complexity of these languages was started
recently~\cite{DBLP:conf/dlt/HospodarM20}.

A \emph{star-free language} is a language which can be written with an extended regular expression, i.e.,
an expression involving concatenation, the Boolean operations and Kleene star,
without using the Kleene star~\cite{McNaughton67}. The class of star-free languages coincides with the class
of \emph{aperiodic languages}~\cite{Schutzenberger65a}, i.e., those languages recognizable by automata such that no subset of states
is permuted by a word. 


So, in this sense the aperiodic languages are as far away from the group languages
as possible. In~\cite{Hoffmann2021NISextended,DBLP:conf/cai/Hoffmann19} it has been shown that every commutative
and regular language can be decomposed into commutative aperiodic and commutative group languages 
in the following way.

\begin{theorem}[\cite{Hoffmann2021NISextended,DBLP:conf/cai/Hoffmann19}]
\label{thm::decomp}
 Suppose $L \subseteq \Sigma^*$ is commutative and regular. Then,\todo{in introduction als motivation?}
 $L$ is a finite union of languages of the form $U \shuffle V$,
 where $U$ is a commutative aperiodic language
 and $V$ is a commutative group language
 over a subalphabet\footnote{Over the whole alphabet $\Sigma$,
 these languages are precisely the languages recognizable by automata whose transition
 monoids are $0$-groups, i.e., groups with a zero element adjoined.} of $\Sigma$.
\end{theorem}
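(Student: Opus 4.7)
The plan is to exploit the per-letter threshold--period structure of commutative regular languages. Fix any commutative DFA recognizing $L$; since its transition monoid is finite and commutative, for every letter $a \in \Sigma$ the sequence of iterates $\delta_a^0, \delta_a^1, \ldots$ is eventually periodic, yielding a threshold $t_a \ge 0$ and a period $p_a \ge 1$ with $\delta_a^{t_a + p_a} = \delta_a^{t_a}$. Commutation $\delta_a \delta_b = \delta_b \delta_a$ implies that $\delta_w$ depends only on the Parikh image of $w$, and in fact only on the datum $\varphi(w) = (\varphi_a(|w|_a))_{a \in \Sigma}$, where $\varphi_a(n) = n$ if $n < t_a$ and $\varphi_a(n) = (\top, n \bmod p_a)$ otherwise. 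The range of $\varphi$ is finite; call its values \emph{cells} and note that membership in $L$ is constant on every cell.

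Next, I would group the words of $L$ by cells, obtaining a finite union $L = \bigcup_c L_c$ over the cells $c$ whose $\varphi$-preimages are contained in $L$. It then suffices to decompose each $L_c$ as $U_c \shuffle V_c$ with $U_c$ commutative aperiodic and $V_c$ commutative group, each over a subalphabet of $\Sigma$. For a cell $c$, set $S_c = \{a \in \Sigma : c_a \in \{0, \ldots, t_a - 1\}\}$ and $T_c = \Sigma \setminus S_c$; for $a \in T_c$ pick the least $s_a \ge t_a$ with the residue modulo $p_a$ prescribed by $c$. Take $U_c$ to be the (finite, hence aperiodic) commutative language of all permutations of the word with $c_a$ copies of $a$ for $a \in S_c$ and $s_a$ copies of $a$ for $a \in T_c$, and take $V_c \subseteq T_c^*$ to be the set of words satisfying $p_a \mid |w|_a$ for every $a \in T_c$. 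A direct comparison of Parikh images then gives $U_c \shuffle V_c = L_c$.

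That $U_c$ is commutative and aperiodic is immediate from finiteness, while $V_c$ is recognized over the subalphabet $T_c$ by the product of cycles of lengths $p_a$, in which every letter acts by a permutation of the state set; hence $V_c$ is a commutative group language. Summing over the cells contained in $L$ completes the decomposition. The main subtle point is justifying that the eventual periodicity can be read off letter-by-letter to produce the cell partition: this uses commutativity of the transitions in an essential way, so that $\delta_w$ factors through the Parikh image and each coordinate $|w|_a$ can be analyzed independently. Once this coordinate-wise reduction is in hand, the cell construction and the clean separation into an aperiodic subalphabet $S_c$ and a group subalphabet $T_c$ are routine.
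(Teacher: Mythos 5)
Your proof is correct and follows essentially the same route as the source of this statement (the paper only quotes Theorem~\ref{thm::decomp} from~\cite{Hoffmann2021NISextended,DBLP:conf/cai/Hoffmann19} rather than proving it): your per-letter thresholds and periods are exactly the index and period vectors, your cells are in effect the states of the minimal commutative automaton used there, and splitting each cell into a finite (hence aperiodic) commutative part shuffled with a product-of-cycles group language over the subalphabet $T_c$ is precisely the decomposition underlying Theorem~\ref{thm:reg_commutative_form}. The only point you use silently, the existence of a commutative DFA for a commutative regular language, is the standard fact that the minimal DFA of a commutative language has commuting letter actions, so there is no real gap.
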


Here, we will investigate the state complexity of operations considered
in~\cite{Hoffmann2021NISextended,DBLP:conf/cai/Hoffmann19} for general commutative regular languages
for the commutative group and the commutative aperiodic languages separately. Additionally, we will investigate four new operations -- the upward
and downward closure and interior operations, denoted by $\mathop{\uparrow\!} U$,
$\mathop{\downarrow\!} U$,
$\mathop{\uptodownarrow\!} U$
and $\mathop{\downtouparrow\!} U$ -- for which we first state a bound on general
commutative regular languages and then also bounds for commutative aperiodic
and commutative group languages. See Table~\ref{tab:sc_known_results} and Table~\ref{tab:sc_results_here} 
for a summary of the results.

\section{Preliminaries}

In the present work,
we assume that $k \ge 0$ denotes our \emph{alphabet} size and $\Sigma = \{a_1, \ldots, a_k\}$.
We will also write $a,b,c$ for $a_1,a_2,a_3$ in case of $|\Sigma| \le 3$.
The set $\Sigma^{\ast}$ denotes
the set of all finite sequences, i.e., of all \emph{words}. The finite sequence of length zero,
or the \emph{empty word}, is denoted by $\varepsilon$. For a given word we denote by $|w|$
its \emph{length}, and for $a \in \Sigma$ by $|w|_a$ the \emph{number of occurrences of the symbol $a$}
in $w$. A \emph{language} is a subset of $\Sigma^*$.

The \emph{shuffle operation}, denoted by $\shuffle$, is defined by
 \begin{multline*}
    u \shuffle v  = \{ w \in \Sigma^*  \mid  w = x_1 y_1 x_2 y_2 \cdots x_n y_n 
    \emph{ for some words } \\ x_1, \ldots, x_n, y_1, \ldots, y_n \in \Sigma^*
    \emph{ such that } u = x_1 x_2 \cdots x_n \emph{ and } v = y_1 y_2 \cdots y_n \},
 \end{multline*}
 for $u,v \in \Sigma^{\ast}$ and 
  $L_1 \shuffle L_2  := \bigcup_{x \in L_1, y \in L_2} (x \shuffle y)$ for $L_1, L_2 \subseteq \Sigma^{\ast}$.
 If $L_1, \ldots, L_n \subseteq \Sigma^*$, we set $\bigshuffle_{i=1}^n L_i = L_1 \shuffle \ldots \shuffle L_n$.

Let $\Gamma \subseteq \Sigma$.
The \emph{projection homomorphism} $\pi_{\Gamma} : \Sigma^* \to \Gamma^*$ 
is the homomorphism given by $\pi_{\Gamma}(x) = x$ for $x \in \Gamma$,
$\pi_{\Gamma}(x) = \varepsilon$ otherwise and extended by $\pi_{\Gamma}(\varepsilon) = \varepsilon$
and $\pi_{\Gamma}(wx) = \pi_{\Gamma}(w)\pi_{\Gamma}(x)$ for $w \in \Sigma^*$ and $x \in \Sigma$.
As a shorthand, we set, with respect to a given naming $\Sigma = \{a_1, \ldots, a_k\}$,
$\pi_j = \pi_{\{a_j\}}$. Then $\pi_j(w) = a_j^{|w|_{a_j}}$.
For $L \subseteq \Sigma^*$, we set $\pi_{\Gamma}(L) = \{ \pi_{\Gamma}(u) \mid u \in L \}$.

A quintuple $\mathcal A = (\Sigma, Q, \delta, q_0, F)$ is a \emph{finite deterministic and complete automaton},  
where $\Sigma$ is the \emph{input alphabet},
 $Q$ the \emph{finite set of states}, $q_0 \in Q$
the \emph{start state}, $F \subseteq Q$ the set of \emph{final states} and 
$\delta : Q \times \Sigma \to Q$ is the \emph{totally defined state transition function}.
The transition function $\delta : Q \times \Sigma \to Q$
extends to a transition function on words $\delta^{\ast} : Q \times \Sigma^{\ast} \to Q$
by setting $\delta^{\ast}(q, \varepsilon) := q$ and $\delta^{\ast}(q, wa) := \delta(\delta^{\ast}(q, w), a)$
for $q \in Q$, $a \in \Sigma$ and $w \in \Sigma^{\ast}$. In the remainder, we drop
the distinction between both functions and will also denote this extension by~$\delta$.
Here, we do not consider incomplete automata.
The language \emph{recognized} by an automaton $\mathcal A = (\Sigma, Q, \delta, q_0, F)$ is
$
 L(\mathcal A) = \{ w \in \Sigma^{\ast} \mid \delta(q_0, w) \in F \}.
$
A language $L \subseteq \Sigma^{\ast}$ is called \emph{regular} if $L = L(\mathcal A)$
for some finite automaton~$\mathcal A$. 

A language $L \subseteq \Sigma^*$ is a \emph{group language}, if there exists
a \emph{permutation automaton} $\mathcal A = (\Sigma, Q, \delta, q_0, F)$, i.e., an automaton
such that the map $q \mapsto \delta(q, a)$ is a permutation for each $a \in \Sigma$, recognizing $L$.
A language $L \subseteq \Sigma^*$ is an \emph{aperiodic} language, if there exists an automaton $\mathcal A = (\Sigma, Q, \delta, q_0, F)$ recognizing it
such that, for each $w \in \Sigma^*$, $q \in Q$ and $n \ge 1$, if $\delta(q, w^n) = q$, then $\delta(q, w) = q$.

The \emph{Nerode right-congruence} 
with respect to $L \subseteq \Sigma^*$ is defined, for $u,v \in \Sigma^*$, by $u \equiv_L v$ if and only if 
$
 \forall x \in \Sigma^* : ux \in L \leftrightarrow vx \in L.
$
The equivalence class of $w \in \Sigma^{\ast}$
is denoted by $[w]_{\equiv_L} = \{ x \in \Sigma^{\ast} \mid x \equiv_L w \}$.
A language is regular if and only if the above right-congruence has finite index, and it can
be used to define the \emph{minimal deterministic automaton}
$\mathcal A_L = (\Sigma, Q_L, \delta_L, [\varepsilon]_{\equiv_L}, F_L)$
with 
$Q_L  = \{ [u]_{\equiv_L} \mid u \in \Sigma^{\ast} \}$,
$\delta_L([w]_{\equiv_L}, a)  = [wa]_{\equiv_L}$
and $F_L = \{ [u]_{\equiv_L} \mid u \in L \}$.
Let $L \subseteq \Sigma^*$ be regular
 with minimal automaton $\mathcal A_L = (\Sigma, Q_L, \delta_L, [\varepsilon]_{\equiv_L}, F_L)$.
 The number $|Q_L|$ is called the \emph{state complexity} of $L$
Two words are said to be \emph{distinguishable}, if they denote different right-congruence
classes for a given language.

A language $L \subseteq \Sigma^*$ is \emph{commutative},
if, for $u,v \in \Sigma^*$ such that $|v|_x = |u|_x$ for every $x \in \Sigma$,
we have $u \in L$ if and only if $v \in L$.
\begin{toappendix} 
An automaton $\mathcal A = (\Sigma, Q, \delta, q_0, F)$
is called \emph{commutative}, if, for each $q \in Q$
and $a,b \in \Sigma$, we have $\delta(q, ab) = \delta(q, ba)$.
\end{toappendix}
For commutative regular languages we have the following normal form.

\begin{theorem}[\cite{Hoffmann2021NISextended,DBLP:conf/cai/Hoffmann19}]\label{thm:reg_commutative_form}
 Let $\Sigma = \{a_1, \ldots, a_k\}$ be our alphabet.
 A commutative language $L \subseteq \Sigma^*$
 is regular if and only if it can be written in the form
 $
  L = \bigcup_{i=1}^n U_1^{(i)} \shuffle \ldots \shuffle U_k^{(i)}
 $
 with non-empty unary regular languages $U_j^{(i)} \subseteq \{a_j\}^*$
 for $i \in \{1,\ldots, n\}$ and $j \in \{1,\ldots k\}$
 that can be recognized
 by unary automata with a single final~state.
\end{theorem}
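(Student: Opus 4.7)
The plan is to handle the two directions separately, with the nontrivial work in the ``only if'' direction.

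For the easy direction, suppose $L$ has the claimed form. Each unary language $U_j^{(i)}$ is regular, and since shuffles of finitely many regular languages are regular, so is $L$. For commutativity, note that for unary languages $U_j^{(i)} \subseteq \{a_j\}^*$ over pairwise disjoint alphabets, the shuffle $U_1^{(i)} \shuffle \cdots \shuffle U_k^{(i)}$ equals $\{w \in \Sigma^* \mid \pi_j(w) \in U_j^{(i)} \text{ for every } j\}$, and membership in this set depends only on the Parikh vector $(|w|_{a_1}, \ldots, |w|_{a_k})$. Hence each such shuffle, and thus their finite union, is commutative.

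For the only if direction, let $L$ be commutative and regular with minimal DFA $\mathcal A_L$ of $N$ states. By pigeonhole, for every state $q$ and every letter $a_j$, the sequence $q, \delta(q, a_j), \delta(q, a_j^2), \ldots$ becomes cyclic within $N$ steps. Setting $T := N$ and $P := \lcm(1, 2, \ldots, N)$ yields uniform thresholds and a common period: $\delta(q, a_j^{n+P}) = \delta(q, a_j^n)$ for every $q$, every letter $a_j$, and every $n \geq T$. Combined with commutativity, which lets us rearrange the letters in any word $w$ into the block form $a_1^{|w|_{a_1}} \cdots a_k^{|w|_{a_k}}$ without changing the state reached, this implies that membership of $w$ in $L$ depends only on its \emph{truncated Parikh vector} $\hat\psi(w) = (\hat\psi_1(w), \ldots, \hat\psi_k(w))$, where $\hat\psi_j(w) = |w|_{a_j}$ if $|w|_{a_j} < T$ and $\hat\psi_j(w) = T + ((|w|_{a_j} - T) \bmod P)$ otherwise.

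Now I partition $\Sigma^*$ according to $\hat\psi$: for each admissible vector $r = (r_1, \ldots, r_k)$ with $0 \leq r_j \leq T + P - 1$, let $A_r = \{w \in \Sigma^* \mid \hat\psi(w) = r\}$. One checks that $A_r = U_1^{(r)} \shuffle \cdots \shuffle U_k^{(r)}$, where $U_j^{(r)} = \{a_j^{r_j}\}$ if $r_j < T$ and $U_j^{(r)} = a_j^{r_j}(a_j^P)^*$ if $r_j \geq T$. Each $U_j^{(r)}$ is nonempty, unary, regular, and recognizable by a unary automaton with a single final state (a singleton or an arithmetic progression). Since $L$ is constant on each $A_r$, the union $L = \bigcup_{r \,:\, A_r \subseteq L} A_r$ is finite and gives the claimed form.

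The main obstacle is the passage from the automaton-theoretic periodicity to membership being governed by the truncated Parikh vector; this is exactly where commutativity is used. Without it, the reachable state after reading a word could depend on the interleaving of letters rather than only on the letter counts, and the clean coordinate-wise decomposition into shuffles of unary languages would collapse.
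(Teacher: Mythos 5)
Your proof is correct: the easy direction is fine, and in the converse direction the uniform threshold $T=N$ and period $P=\lcm(1,\ldots,N)$ do give $\delta(q,a_j^{n+P})=\delta(q,a_j^n)$ for all states $q$ and $n\ge T$, commutativity reduces membership to the letter counts, and your sets $A_r$ (with $U_j^{(r)}$ a singleton or $a_j^{r_j}(a_j^P)^*$) are exactly the classes of the truncated Parikh map, each a shuffle of non-empty unary languages with a single final state. However, your route differs from the one underlying the cited result. The paper (and the appendix lemmas it relies on, following the G\'omez--Alvarez minimal commutative automaton) works per letter with the Nerode classes $[a_j^m]_{\equiv_L}$, i.e.\ with the individual index $i_j$ and period $p_j$ of each letter, and produces one shuffle term per final state of the product automaton $S_1\times\cdots\times S_k$. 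That version yields unary components whose indices and periods are exactly the entries of the index and period vectors of $L$, which is what the rest of the paper needs for its state complexity bounds (Theorems~\ref{thm:sc_index_period}, \ref{thm:char_grp_aperiodic_by_index_period_vectors} and the group/aperiodic normal forms). Your argument is more elementary and self-contained, but the parameters are much coarser: the common period $\lcm(1,\ldots,N)$ is exponential in $N$ and the number of union terms is $(T+P)^k$ rather than being tied to the actual per-letter behaviour of $L$; it proves the decomposition theorem as stated, but would not directly support the quantitative refinements used later in the paper.
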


Let $L \subseteq \Sigma^*$ be a commutative regular language. 
For each $j \in \{1, \ldots, k\}$
let $i_j \ge 0$ and $p_j \ge 1$
be the smallest numbers such that $[a_j^{i_j}]_{\equiv_L} = [a_j^{i_j + p_j}]_{\equiv_L}$.
The vectors $(i_1, \ldots, i_k)$
and $(p_1, \ldots, p_k)$ are then called the \emph{index} 
and \emph{period} vectors of $L$.
These notions where introduced in~\cite{GomezA08,Hoffmann2021NISextended,DBLP:conf/cai/Hoffmann19}
and it was shown that they can be used to bound the state complexity of $L$.

\begin{theorem}[\cite{GomezA08,Hoffmann2021NISextended,DBLP:conf/cai/Hoffmann19}]
\label{thm:sc_index_period}
 Let $L \subseteq \Sigma^*$ be a commutative regular language
 with index vector $(i_1, \ldots, i_k)$
 and period vector $(p_1, \ldots, p_k)$.
 Then, for any $j \in \{1,\ldots,k\}$, we have $i_j + p_j \le \stc(L) \le \prod_{r=1}^k (i_r + p_r)$.
\end{theorem}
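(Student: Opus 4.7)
The plan is to prove the two bounds separately, each working directly from the definition of index and period.

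For the lower bound $i_j + p_j \le \stc(L)$, I would exhibit $i_j+p_j$ pairwise $\equiv_L$-distinguishable words, namely $\varepsilon, a_j, a_j^2, \ldots, a_j^{i_j+p_j-1}$. The minimality in the definition of $(i_j,p_j)$ is precisely the statement that the unary subautomaton of $\mathcal A_L$ reached from $[\varepsilon]_{\equiv_L}$ by powers of $a_j$ has ``rho''-shape with tail $i_j$ and loop $p_j$, so these words land in $i_j+p_j$ distinct Nerode classes, forcing $\stc(L) = |Q_L| \ge i_j+p_j$.

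For the upper bound $\stc(L) \le \prod_r(i_r+p_r)$, I would build an explicit complete DFA $\mathcal A = (\Sigma, Q, \delta, q_0, F)$ with $|Q| = \prod_{r=1}^k(i_r+p_r)$ recognizing $L$. Take
\[ Q = \prod_{r=1}^k \{0, 1, \ldots, i_r+p_r-1\}, \quad q_0 = (0, \ldots, 0), \]
let $\delta((n_1,\ldots,n_k), a_j)$ increment $n_j$ by $1$ when $n_j+1 < i_j+p_j$ and wrap it back to $i_j$ otherwise, and set $F = \{(n_1, \ldots, n_k) \in Q : a_1^{n_1} \cdots a_k^{n_k} \in L\}$. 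A routine induction on $|w|$ gives $\delta(q_0, w) = (m_1, \ldots, m_k)$, where $m_j = |w|_{a_j}$ if $|w|_{a_j} < i_j$ and $m_j = i_j + ((|w|_{a_j} - i_j) \bmod p_j)$ otherwise. Using commutativity of $L$, proving $L(\mathcal A) = L$ reduces to the statement that for every $w$, $a_1^{|w|_{a_1}} \cdots a_k^{|w|_{a_k}} \in L$ iff $a_1^{m_1} \cdots a_k^{m_k} \in L$.

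The main, and essentially only, obstacle is this last equivalence: the congruence $a_j^{i_j} \equiv_L a_j^{i_j+p_j}$ is a priori only a statement about right-continuations, so one has to ``export'' it to a situation where the $a_j$-block sits next to other letter blocks in a normalised word. The trick is to use commutativity of $L$ to bring the $a_j$-block to the front, apply $a_j^{i_j} \equiv_L a_j^{i_j+p_j}$ on the unary prefix, and commute back; this yields $a_j^n y \in L \iff a_j^{n + p_j} y \in L$ for all $n \ge i_j$ and $y \in \Sigma^*$. Applying this one coordinate at a time reduces each $|w|_{a_j}$ to $m_j$ without changing membership in $L$, completing the argument. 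Alternatively, one could invoke the normal form of Theorem~\ref{thm:reg_commutative_form} to decompose $L$ as a finite union of shuffles of unary languages and treat each coordinate independently, but the direct construction above is more uniform.
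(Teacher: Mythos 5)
Your proposal is correct and follows essentially the same route as the cited proof that the paper relies on: your explicit DFA on $\prod_{r=1}^k\{0,\ldots,i_r+p_r-1\}$ with threshold-then-modulo counting per letter is exactly the minimal commutative automaton $\mathcal C_L$ of Definition~\ref{def::min_com_aut} (with Nerode classes $[a_j^m]_{\equiv_L}$ renamed as integers, the same identification the paper uses in the proof of Theorem~\ref{thm:sc_closure_interior}), and the lower bound via pairwise $\equiv_L$-inequivalence of $\varepsilon, a_j,\ldots,a_j^{i_j+p_j-1}$ is the standard argument. No gaps; the commutativity trick you use to export $a_j^{i_j}\equiv_L a_j^{i_j+p_j}$ to arbitrary contexts is precisely what makes $\mathcal C_L$ recognize $L$.
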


\begin{example}
\label{ex:index_period}
  Let  $L = (aa)^{\ast} \shuffle (bb)^{\ast} \cup (aaaa)^{\ast} \shuffle b^{\ast}$.
 Then $(i_1, i_2) = (0,0)$, $(p_1, p_2) = (4,2)$,
$\pi_1(L) = (a a)^{\ast}$ and $\pi_2(L) = b^{\ast}$.
\end{example}

The following result from~\cite{Hoffmann2021NISextended,DBLP:conf/cai/Hoffmann19} connects the index and period vector
with the aperiodic and group languages.

\begin{theorem}
\label{thm:char_grp_aperiodic_by_index_period_vectors}
 A commutative regular language is:
 \begin{enumerate}
 \item aperiodic iff its period vector equals $(1,\ldots, 1)$;
 \item a group language iff its index vector equals $(0,\ldots,0)$.
 \end{enumerate}
\end{theorem}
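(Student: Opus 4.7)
My plan is to translate both equivalences into conditions on how each generator $a_j$ acts on the state set of the minimal automaton $\mathcal{A}_L$, and then read them off from the index and period vectors. The key preliminary observation, used in both parts, is that for a commutative language $L$ an equivalence $u \equiv_L v$ lifts to $yux \in L \Leftrightarrow yvx \in L$ for every left context $y$ and right context $x$: by commutativity $yux \in L \Leftrightarrow uyx \in L$ and $yvx \in L \Leftrightarrow vyx \in L$ (same letter counts), so $\equiv_L$ applied to the right context $yx$ settles it. In particular, the equality $[a_j^{i_j}]_{\equiv_L} = [a_j^{i_j+p_j}]_{\equiv_L}$ lifts to the identity of functions $\delta_L(\cdot,\, a_j^{i_j}) = \delta_L(\cdot,\, a_j^{i_j+p_j})$ on the whole of $Q_L$.

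For Part (1), if $L$ is aperiodic then so is $\mathcal{A}_L$ (a standard fact, since the transition monoid of $\mathcal{A}_L$ is the syntactic monoid of $L$); applying the defining aperiodicity condition to $q = [a_j^{i_j}]_{\equiv_L}$ with $w = a_j$ and $n = p_j$ gives $\delta_L(q,\, a_j^{p_j}) = q$, whence $\delta_L(q, a_j) = q$ and thus $p_j = 1$. Conversely, suppose $p_j = 1$ for every $j$. By the opening observation each generator $a_j$ satisfies $\delta_L(q,\, a_j^{i_j}) = \delta_L(q,\, a_j^{i_j+1})$ for all $q \in Q_L$. Commutativity of $L$ forces the transition monoid of $\mathcal{A}_L$ to be commutative, so any element $f = \sigma_1^{c_1}\cdots\sigma_k^{c_k}$ (with $\sigma_j$ denoting the action of $a_j$) satisfies $f^N = f^{N+1}$ once $N$ exceeds all individual stabilisation thresholds; iterating the hypothesis $\delta_L(q, w^n) = q$ sufficiently many times to reach this stable regime then yields $\delta_L(q, w) = q$, i.e.\ $\mathcal{A}_L$ is aperiodic.

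For Part (2), $L$ is a group language iff every $a_j$ induces a permutation of $Q_L$. If $i_j = 0$ for all $j$, then $[\varepsilon]_{\equiv_L} = [a_j^{p_j}]_{\equiv_L}$ lifts via the opening observation to $\delta_L(q,\, a_j^{p_j}) = q$ for every $q$, so $a_j^{p_j-1}$ is a two-sided inverse of $a_j$ in the transition monoid and $a_j$ is a permutation. Conversely, if every $a_j$ is a permutation of $Q_L$ then the sequence $([a_j^m]_{\equiv_L})_{m \ge 0}$ cycles already from $m = 0$, which is precisely $i_j = 0$.

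The main obstacle is the opening lifting step: the index $i_j$ and period $p_j$ are defined purely from the pure-letter chain $[a_j^m]_{\equiv_L}$ based at the initial state, whereas aperiodicity and the permutation property are global conditions on all states of $\mathcal{A}_L$. Commutativity of $L$ is exactly what bridges the two, letting one move $a_j^m$ past an arbitrary word representing a state without changing its Nerode class. Once this bridge is in place, the rest of each part unwinds from the definitions of aperiodic automaton and permutation automaton.
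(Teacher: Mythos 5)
Your proof is correct, but note that this paper does not actually prove Theorem~\ref{thm:char_grp_aperiodic_by_index_period_vectors}: it is imported from the cited prior work, so there is no in-paper argument to compare against. Your route is self-contained and sound. The essential bridge — that for a commutative language the Nerode relation is in fact a two-sided congruence, so the defining relation $[a_j^{i_j}]_{\equiv_L}=[a_j^{i_j+p_j}]_{\equiv_L}$ lifts to an identity of transformations $\delta_L(\cdot,a_j^{i_j})=\delta_L(\cdot,a_j^{i_j+p_j})$ on all of $Q_L$ — is exactly what is needed, and your verification of it (commute the letters past the left context) is right. The remaining steps check out: the forward direction of (1) only uses the chain at the initial state plus counter-freeness of $\mathcal A_L$; the converse uses commutativity of the transition monoid and $\sigma_j^{i_j}=\sigma_j^{i_j+1}$ to get $f^N=f^{N+1}$, which indeed yields the counter-free condition (from $f^n(q)=q$ iterate to $f^{nN}(q)=q$ and use $f^{nN}=f^N=f^{N+1}\circ f^{-0}$, giving $f(q)=q$); part (2) is handled correctly in both directions, including $i_j=0$ via the recurrence of $[\varepsilon]_{\equiv_L}$ under a permutation of finite order. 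You do lean on two standard transfer facts — that an aperiodic (resp.\ group) language has a counter-free (resp.\ permutation) \emph{minimal} automaton — which hold because the syntactic monoid divides the transition monoid of any recognizing DFA and aperiodicity/being a group are inherited under division; it would be worth one sentence to say this explicitly, since the paper's definitions only posit the existence of \emph{some} recognizing automaton with the property. The cited source argues instead through the minimal commutative automaton $\mathcal C_L$ and the unary projection automata (the machinery reproduced in this paper's appendix), where the $j$-th index and period appear directly as the index and period of a unary component; your syntactic-monoid argument avoids that construction at the cost of invoking the two standard facts above — both are legitimate, and yours is arguably more elementary in that it never leaves the minimal automaton.
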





 Let $u, v \in \Sigma^*$.
 Then, $u$ is a \emph{subsequence}\footnote{Also called a \emph{scattered subword}
 in the literature~\cite{DBLP:journals/tcs/GruberHK07,KarandikarNS16}.} of $v$, denoted by $u \preccurlyeq v$,
 if and only if 
 $
   v \in u \shuffle \Sigma^*.
 $
 The thereby given order is called the \emph{subsequence order}.
 Let $L \subseteq \Sigma^*$.
 Then, we define:
\begin{enumerate} 
\item the \emph{upward closure}, by 
 $\mathop{\uparrow\!} L = L \shuffle \Sigma^* = \{ u \in \Sigma^* : \exists v \in L : v \preccurlyeq u \}$;

\item the \emph{downward closure}, by $\mathop{\downarrow\!} L = \{ u \in \Sigma^* : u \shuffle \Sigma^* \cap L \ne \emptyset \} = \{ u \in \Sigma^* : \exists v \in L : u \preccurlyeq v \}$;

\item the \emph{upward interior}, denoted by $\mathop{\downtouparrow\!} L$,
 as the largest upward-closed set in $L$, i.e. the largest
 subset $U \subseteq L$ such that $\mathop{\uparrow\!} U = U$;

\item the \emph{downward interior}, denoted by $\mathop{\uptodownarrow\!} L$,
 as the largest downward-closed set in $L$, i.e., the largest
 subset $U \subseteq L$ such that $\mathop{\downarrow\!} U = U$.
\end{enumerate}
The following equations are valid~\cite{KarandikarNS16}:
\begin{equation}\label{eqn:interior}
 \mathop{\uptodownarrow\!} L = \Sigma^* \setminus \mathop{\uparrow\!} (\Sigma^* \setminus L) \quad 
 \mathop{\downtouparrow\!} L = \Sigma^* \setminus \mathop{\downarrow\!} (\Sigma^* \setminus L).
\end{equation}

A remarkable fact is that for \emph{every} language, the above closure operators
give regular languages. This is based on the fact that the subsequence order
is a well-order, i.e., any upward-closed set is generated by a finite subset of words~\cite{Hai69a,Hig52}.



For general $L \subseteq \Sigma^*$, there is no way to compute a recognizing
automaton for the upward and downward operations.
This is seen by an easy argument, due to~\cite{DBLP:journals/tcs/GruberHK07}.
Let $L \subseteq \Sigma^*$ be a recursively enumerable language.
Then, $L$ is non-empty if and only if $\mathop{\uparrow\!} L$ is non-empty.
However, the former problem is undecidable for recursively enumerable languages,
but decidable for regular languages. Hence, if we can compute a recognizing 
automaton for $\mathop{\uparrow\!} L$, we can decide non-emptiness for $L$, which 
is, in general, not possible.
But for regular and context-free $L$, recognizing automata
for these operations
are computable~\cite{DBLP:journals/tcs/GruberHK07,DBLP:journals/dm/Leeuwen78}.

In~\cite{DBLP:journals/ita/Heam02}, a lower bound for the state complexity of the upward closure
was established by using the language $L = \bigcup_{a \in \Sigma} \{a^N\}$, which
is commutative and finite. As every finite language is aperiodic and 
we are interested in this class, let us highlight this fact
with the next statement.

\begin{proposition}[H\'eam~\cite{DBLP:journals/ita/Heam02}]
\label{prop:aperiodic:upward_lower_bound}
 Set 
 \[ 
 g(n) = \max\{ \stc(\mathop{\uparrow\!} L) \mid \stc(L) \le n \mbox{ and } L\mbox{ is finite and commutative } \}.
 \]
 Then $g(n) \in \Omega\left( \left( \frac{n}{|\Sigma|} \right)^{|\Sigma|} \right)$.
\end{proposition}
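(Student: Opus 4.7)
The plan is to analyze the witness language $L_N = \bigcup_{i=1}^{k} \{a_i^N\} \subseteq \Sigma^*$ suggested before the statement (here $k = |\Sigma|$ and $N$ is a parameter to be chosen), which is clearly finite and commutative, and then to compare the state complexity of $L_N$ to that of $\mathop{\uparrow\!} L_N$.

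First I would determine $\stc(L_N)$ exactly, and argue it is $kN+2$. The minimal DFA must record, for every prefix read so far, whether it is of the form $a_i^j$ for some letter $a_i$ and some $0 \le j \le N$, or whether two distinct letters have already been seen; in the latter case, no continuation can reach $L_N$. This yields the initial state (empty prefix), the states $(a_i,j)$ for $1 \le i \le k$ and $1 \le j \le N$ (the $k$ accepting states being $(a_i,N)$), and one dead state, for a total of $1 + kN + 1 = kN+2$. Distinguishability of these classes is straightforward: $(a_i,j)$ is separated from $(a_i,j')$ by $a_i^{N-\min(j,j')}$ and from $(a_{i'},j')$ with $i \ne i'$ by $a_i^{N-j}$.

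Next I would compute $\stc(\mathop{\uparrow\!} L_N)$. By the definition of $\preccurlyeq$, we have
\[
 \mathop{\uparrow\!} L_N \;=\; \{\, w \in \Sigma^* \mid \exists\, i \in \{1,\ldots,k\}:\, |w|_{a_i} \ge N \,\}.
\]
A natural DFA therefore tracks the Parikh vector $(|w|_{a_1},\ldots,|w|_{a_k})$ with each coordinate capped at $N$; all vectors having at least one coordinate equal to $N$ collapse into a single accepting sink. This gives at most $N^k + 1$ states. For the lower bound on this DFA, I would show that all $N^k$ vectors in $\{0,\ldots,N-1\}^k$ are pairwise Nerode-inequivalent and inequivalent to the sink: given distinct $(c_1,\ldots,c_k) \ne (c_1',\ldots,c_k')$, pick a coordinate $i$ where $c_i < c_i'$ (say) and extend by $a_i^{N - c_i'}$; this carries the second tuple into the accepting sink while keeping the first below $N$ in every coordinate. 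Hence $\stc(\mathop{\uparrow\!} L_N) = N^k + 1$.

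Finally I would conclude by setting $N = \lfloor (n-2)/k \rfloor$, so that $\stc(L_N) \le n$ and
\[
 g(n) \;\ge\; \stc(\mathop{\uparrow\!} L_N) \;=\; N^k + 1 \;\ge\; \left\lfloor \frac{n-2}{k} \right\rfloor^{\!k} \;=\; \Omega\!\left(\left(\frac{n}{|\Sigma|}\right)^{\!|\Sigma|}\right),
\]
which is the claimed bound. The only non-routine step is the Nerode-distinguishability argument for $\mathop{\uparrow\!} L_N$, but since one coordinate of the Parikh vector can be targeted by a pure-letter suffix, this is essentially immediate and poses no real obstacle.
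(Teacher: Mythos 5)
Your proposal is correct and uses exactly the witness family the paper relies on (it only cites H\'eam with the language $\bigcup_{a\in\Sigma}\{a^N\}$ and gives no proof), and your computation of $\stc(\mathop{\uparrow\!} L_N)=N^{|\Sigma|}+1$ and the choice $N=\lfloor (n-2)/|\Sigma|\rfloor$ deliver the stated $\Omega\bigl((n/|\Sigma|)^{|\Sigma|}\bigr)$ bound. One small inaccuracy: for $k\ge 2$ the value $\stc(L_N)$ is not $kN+2$ but $k(N-1)+3$, because the states reached by $a_1^N,\ldots,a_k^N$ are Nerode-equivalent (every nonempty suffix rejects from all of them), so your separating word for $i\ne i'$ with $j=j'=N$ is $\varepsilon$ and fails; this only makes $\stc(L_N)$ smaller, so your constraint $\stc(L_N)\le n$ and the final bound are unaffected.
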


The next is from~\cite{Hoffmann2021NISextended,DBLP:conf/cai/Hoffmann19}.

\begin{theorem}
\label{thm::sc_aperiodic}
 Let $U, V \subseteq \Sigma^*$ be aperiodic commutative languages with index vectors $(i_1, \ldots, i_k)$ and  $(j_1, \ldots, j_k)$. 
 Then, $U \shuffle V$ has
 index vector component-wise less than $(i_1 + j_1, \ldots, i_k + j_k)$
 and period vector $(1,\ldots, 1)$. So,
 $
  \stc(U\shuffle V) \le \prod_{l=1}^k (i_l + j_l + 1).
 $ 
 Hence, Theorem~\ref{thm:sc_index_period}
 yields $\stc(U\shuffle V) \le (\stc(U) + \stc(V) - 1)^{|\Sigma|}$.
\end{theorem}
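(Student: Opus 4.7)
The plan is to work at the level of Parikh images. Since $U$ and $V$ are commutative, so is $L := U \shuffle V$, and each commutative language $M$ is determined by its Parikh image $\mathcal{P}(M) := \{(|w|_{a_1},\ldots,|w|_{a_k}) : w \in M\} \subseteq \mathbb{N}^k$; a direct check shows $\mathcal{P}(L) = \mathcal{P}(U) + \mathcal{P}(V)$ (Minkowski sum).

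I would then translate aperiodicity into a Parikh-level condition. By Theorem~\ref{thm:char_grp_aperiodic_by_index_period_vectors}, $U$ and $V$ have period vector $(1,\ldots,1)$, so $a_l^{i_l} \equiv_U a_l^{i_l+1}$ and $a_l^{j_l} \equiv_V a_l^{j_l+1}$ for every $l$. Using commutativity, this is equivalent to the following Parikh statement: for every $\mathbf{p} \in \mathbb{N}^k$ with $p_l \ge i_l$, $\mathbf{p} \in \mathcal{P}(U) \iff \mathbf{p} + e_l \in \mathcal{P}(U)$, where $e_l$ denotes the $l$-th unit vector, and analogously for $V$ with threshold $j_l$.

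The key step is to transfer this to $L$: for every $l$ and every $\mathbf{r} \in \mathbb{N}^k$ with $r_l \ge i_l + j_l$, $\mathbf{r} \in \mathcal{P}(L) \iff \mathbf{r} + e_l \in \mathcal{P}(L)$. The forward direction is pigeonhole: any decomposition $\mathbf{r} = \mathbf{p} + \mathbf{q}$ with $\mathbf{p} \in \mathcal{P}(U), \mathbf{q} \in \mathcal{P}(V)$ satisfies $p_l + q_l = r_l \ge i_l + j_l$, so at least one of $p_l \ge i_l$ or $q_l \ge j_l$ holds, and $e_l$ may be added on that side while remaining inside the corresponding Parikh image. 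The reverse direction is symmetric: a decomposition of $\mathbf{r} + e_l$ has $p_l + q_l \ge i_l + j_l + 1$, so one summand strictly exceeds its threshold and $e_l$ may be subtracted there. Keeping this threshold bookkeeping straight is the step most likely to require care.

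The previous step immediately gives that $L$ has period vector $(1,\ldots,1)$ and index vector component-wise bounded by $(i_1+j_1,\ldots,i_k+j_k)$. Theorem~\ref{thm:sc_index_period} then yields $\stc(L) \le \prod_{l=1}^k (i_l + j_l + 1)$. The same theorem applied to $U$ and $V$ gives $i_l + 1 \le \stc(U)$ and $j_l + 1 \le \stc(V)$, hence $i_l + j_l + 1 \le \stc(U) + \stc(V) - 1$, and the stated $(\stc(U)+\stc(V)-1)^{|\Sigma|}$ bound follows.
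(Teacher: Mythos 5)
Your argument is correct. The transfer step is sound: in the forward direction $p_l+q_l\ge i_l+j_l$ forces $p_l\ge i_l$ or $q_l\ge j_l$, and in the reverse direction $p_l+q_l\ge i_l+j_l+1$ forces a strict excess on one side, so the unit vector $e_l$ can be absorbed or removed inside $\psi(U)$ or $\psi(V)$; together with commutativity of $U\shuffle V$ (which you correctly reduce to the Minkowski-sum description of its Parikh image) this gives $a_l^{i_l+j_l}\equiv_{U\shuffle V}a_l^{i_l+j_l+1}$, hence period $1$ and index at most $i_l+j_l$ in every component, and the two numerical bounds follow from Theorem~\ref{thm:sc_index_period} exactly as you say. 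Two small remarks: read ``component-wise less than'' in the statement as ``at most'', which is what your argument yields and what the bound $\prod_l(i_l+j_l+1)$ requires; and to invoke Theorem~\ref{thm:sc_index_period} for $U\shuffle V$ you implicitly use that this language is regular, which is standard (shuffle preserves regularity, or note that a commutative language with finite index and period in every letter is recognized by the minimal commutative automaton). As for the comparison: the present paper does not prove this theorem at all -- it imports it from the earlier work of the author -- and the machinery used there and elsewhere in this paper (the normal form of Theorem~\ref{thm:reg_commutative_form}, the minimal commutative automaton, and unary decomposition lemmas analogous to those used here for the group case via Pighizzini--Shallit-style concatenation bounds) is quite different in flavor. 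Your route works directly with Parikh images and the restriction of the Nerode congruence to powers of a single letter, avoiding the decomposition into unary languages entirely; it is more elementary and self-contained, while the decomposition approach has the advantage of producing explicit recognizing automata and of generalizing uniformly to the group case, where the index/period arithmetic ($\lcm$/$\gcd$) genuinely needs the unary concatenation results.
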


As a corollary of Theorem~\ref{thm::sc_aperiodic} and Theorem~\ref{thm:char_grp_aperiodic_by_index_period_vectors}, we also get, as the star-free and aperiodic languages coincide~\cite{Schutzenberger65a}, an old result by J.F.Perrot~\cite{DBLP:journals/fuin/CastiglioneR12,DBLP:journals/tcs/Perrot78,DBLP:conf/lata/Restivo15}.

\begin{corollary}[J.F.Perrot~\cite{DBLP:journals/tcs/Perrot78}]
 The shuffle of two commutative star-free languages is star-free.
\end{corollary}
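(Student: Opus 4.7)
The plan is to reduce the corollary to the two theorems cited just above it, using Schützenberger's characterization \cite{Schutzenberger65a} to translate between ``star-free'' and ``aperiodic.'' So I start by letting $U, V \subseteq \Sigma^*$ be commutative star-free languages and invoke Schützenberger to get that $U$ and $V$ are in fact commutative aperiodic languages. Both classes consist of regular languages, and shuffle preserves regularity, so $U \shuffle V$ is regular; it is also commutative, since if $w \in U \shuffle V$ admits a decomposition $w = u \shuffle v$ with $u \in U$ and $v \in V$, then for any permutation $w'$ of $w$ we can write $w' = u' \shuffle v'$ where $u'$ and $v'$ are suitable reorderings of $u$ and $v$ with the same Parikh image, hence $u' \in U$ and $v' \in V$ by commutativity of $U$ and $V$. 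Thus $U \shuffle V$ is a commutative regular language to which Theorem~\ref{thm:char_grp_aperiodic_by_index_period_vectors} applies.

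Next, I apply Theorem~\ref{thm:char_grp_aperiodic_by_index_period_vectors}(1) to $U$ and $V$: since both are aperiodic (and commutative and regular), their period vectors are $(1,\ldots,1)$. Now I feed these into Theorem~\ref{thm::sc_aperiodic}, whose explicit conclusion is that the shuffle of two commutative aperiodic languages again has period vector $(1,\ldots,1)$. Applying Theorem~\ref{thm:char_grp_aperiodic_by_index_period_vectors}(1) in the reverse direction to the commutative regular language $U \shuffle V$, whose period vector is $(1,\ldots,1)$, yields that $U \shuffle V$ is aperiodic. Finally, one more application of Schützenberger's theorem converts aperiodic back into star-free, finishing the proof.

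There is no real obstacle here; the only nonroutine point is that I need $U \shuffle V$ to be commutative in order to apply the index/period characterization of aperiodicity, but this follows immediately from the shuffle definition and commutativity of $U$ and $V$ as sketched above. Everything else is a direct chaining of the stated results.
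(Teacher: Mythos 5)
Your argument is correct and follows exactly the route the paper intends: the corollary is stated there as an immediate consequence of Theorem~\ref{thm::sc_aperiodic} (period vector of the shuffle is $(1,\ldots,1)$), Theorem~\ref{thm:char_grp_aperiodic_by_index_period_vectors} (aperiodicity iff period vector $(1,\ldots,1)$ for commutative regular languages), and Sch\"utzenberger's equivalence of star-free and aperiodic. Your extra check that $U \shuffle V$ is commutative and regular is a reasonable piece of diligence the paper leaves implicit, and it is correct.
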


\section{The Upward and Downward Closure Operations}

In this section, we establish state complexity bounds for the upward and downward closure and interior operations. The constructions also yield
polynomial time algorithms for computing those closures, if the alphabet is fixed
and not allowed to vary with the input.
\todo{das allgemeine ergebnis für shuffle in preliminaries nennen, und dann den teil wieder auskommentieren?}

\begin{toappendix}
The \emph{Parikh map} is the map $\psi : \Sigma^* \to \mathbb N_0^k$ given by $\psi(w) = (|w|_{a_1}, \ldots, |w|_{a_k})$
for $w \in \Sigma^*$.
If $L$ is commutative, then $L$ can be identified with its Parikh image
$\psi(L)$ and we have
\begin{equation}\label{eqn:L_comm_upward_downward}
    u \in \mathop{\downarrow\!} L \Leftrightarrow \exists v \in L : \psi(u) \le \psi(v) \mbox{ and } 
    u \in \mathop{\uparrow\!} L \Leftrightarrow \exists v \in L : \psi(v) \le \psi(u).
\end{equation}

For $L \subseteq \Sigma^*$, we set $\perm(L) = \psi^{-1}(\psi(L))$, the \emph{commutative closure} of $L$.

We need the following construction
from~\cite{GomezA08}.

Let $\Sigma = \{a_1, \ldots, a_k \}$ be our finite alphabet\footnote{Recall that
we might write $\Sigma = \{a,b\}$ as a shorthand for $\Sigma = \{a_1, a_2\}$
in examples.}. The minimal
commutative automaton\footnote{The minimal commutative automaton is a commutative automaton
in the sense of~\cite{BrzozowskiS73}. However, also the minimal automaton
of a commutative language is commutative in this sense~\cite{DBLP:journals/jalc/Fernau019}.
So, naming the construction to come the minimal commutative automaton
might be a bit misleading, probably minimal canonical commutative automaton being
a better choice. Nevertheless, I will stick to the terminology as introduced in~\cite{GomezA08}.}
for a commutative language
was introduced in~\cite{GomezA08}.

\begin{definition}[minimal commutative automaton]
\label{def::min_com_aut}
 Let $L \subseteq \Sigma^*$. The \emph{minimal commutative  automaton} 
 is $\mathcal C_L = (\Sigma, S_1 \times \ldots \times S_k, \delta, s_0, F)$
 with 
\[
 S_j = \{ [a_j^m]_{\equiv_L} : m \ge 0 \}, \quad
 F  = \{ ([\pi_1(w)]_{\equiv_L}, \ldots, [\pi_k(w)]_{\equiv_L}) : w \in L \}
\]
 and $\delta((s_1, \ldots, s_j, \ldots, s_k), a_j) = (s_1, \ldots, \delta_{j}(s_j, a_j), \ldots, s_k)$
 with one-letter transitions $\delta_{j}([a_j^m]_{\equiv_L}, a_j) = [a_j^{m+1}]_{\equiv_L}$ for $j = 1,\ldots, k$ and $s_0 = ([\varepsilon]_{\equiv_L}, \ldots, [\varepsilon]_{\equiv_L})$. 
\end{definition}
 
\begin{remark}
\label{rem:equal_states_C_L}
 Let $L \subseteq \Sigma^*$ be commutative and $\mathcal C_L = (\Sigma, S_1 \times \ldots \times S_k, \delta, s_0, F)$
 be the minimal commutative automaton.
 Note that, by the definition of the transition function in Definition~\ref{def::min_com_aut},
 we have, for any $u,v \in \Sigma^*$,
 \begin{equation}
    \delta(s_0, u) = \delta(s_0, v) \Leftrightarrow \forall j \in \{1,\ldots,k\} : \pi_j(u) \equiv_L \pi_j(v).
 \end{equation}
\end{remark}

If $L$ is commutative, in~\cite{GomezA08} it was shown that this notion is well-defined
and recognizes $L$, and it was noted that in general the minimal commutative automaton is not equal to the 
minimal deterministic and complete automaton for $L$.

\begin{theorem}[G{\'{o}}mez \& Alvarez~\cite{GomezA08}]
\label{thm::min_com_aut}
 Let $L \subseteq \Sigma^*$ be a commutative language.
 Then, $L = L(\mathcal C_L)$
 and $L$ is regular if and only if $\mathcal C_L$ is finite.
\end{theorem}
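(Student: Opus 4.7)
The plan is to first verify that $\mathcal C_L$ is a well-defined deterministic automaton, then prove $L = L(\mathcal C_L)$ by chaining the Nerode equivalence with commutativity, and finally handle the equivalence between regularity of $L$ and finiteness of $\mathcal C_L$.

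The first step is routine: the one-letter transitions $\delta_j([a_j^m]_{\equiv_L}, a_j) = [a_j^{m+1}]_{\equiv_L}$ are well-defined because $\equiv_L$ is a right-congruence, so $a_j^m \equiv_L a_j^n$ implies $a_j^{m+1} \equiv_L a_j^{n+1}$. By a straightforward induction on $|w|$ using the definition of $\delta$, one obtains
\[
 \delta(s_0, w) = ([\pi_1(w)]_{\equiv_L}, \ldots, [\pi_k(w)]_{\equiv_L})
\]
for every $w \in \Sigma^*$, since the letter $a_j$ only updates the $j$-th coordinate by appending $a_j$ to the representative.

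Next I would prove $L = L(\mathcal C_L)$. The inclusion $L \subseteq L(\mathcal C_L)$ is immediate: if $w \in L$, then $\delta(s_0, w)$ coincides with the tuple exhibited in the definition of $F$ (witnessed by $w$ itself). For the converse, suppose $w \in L(\mathcal C_L)$. Then there exists $w' \in L$ such that $\pi_j(w) \equiv_L \pi_j(w')$ for every $j \in \{1,\ldots,k\}$. Since $L$ is commutative and $w'$ has the same Parikh image as $\pi_1(w')\cdots\pi_k(w')$, we have $\pi_1(w')\pi_2(w')\cdots\pi_k(w') \in L$. Now replace the prefix $\pi_1(w')$ by $\pi_1(w)$ using the right-congruence $\equiv_L$; then, using commutativity of $L$, bring $\pi_2(w')$ to the front, replace it by $\pi_2(w)$, and permute back. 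Iterating this argument for $j = 1,\ldots,k$ yields $\pi_1(w)\pi_2(w)\cdots\pi_k(w) \in L$, and one more appeal to commutativity gives $w \in L$. This interleaving of Nerode replacement and commutative rearrangement is the key step of the whole proof.

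Finally, I would settle the characterization of regularity. If $\mathcal C_L$ is finite then $L = L(\mathcal C_L)$ is accepted by a finite automaton, hence regular. Conversely, if $L$ is regular then $\equiv_L$ has finite index, so its restriction to each unary subset $\{a_j\}^*$ has at most $|\Sigma^*/\!\equiv_L|$ many classes; thus each $S_j$ is finite and consequently so is $S_1 \times \cdots \times S_k$. The main obstacle in the whole argument is the $L(\mathcal C_L) \subseteq L$ direction, because it is the only place where both defining features of the setting, namely commutativity of $L$ and the right-congruence property of $\equiv_L$, must be combined; everything else is either definitional unpacking or a straightforward counting argument.
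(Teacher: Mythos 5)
Your proof is correct. Note that the paper does not prove this theorem at all: it is imported verbatim from G\'omez \& Alvarez~\cite{GomezA08}, so there is no in-paper argument to compare against. Your reconstruction is the natural one and all steps check out: well-definedness of $\delta_j$ from the right-congruence property, the reachability formula $\delta(s_0,w) = ([\pi_1(w)]_{\equiv_L},\ldots,[\pi_k(w)]_{\equiv_L})$, and, for the only nontrivial inclusion $L(\mathcal C_L)\subseteq L$, the alternation of prefix replacement via $\equiv_L$ with commutative rearrangement, applied coordinate by coordinate, which is exactly where commutativity of $L$ is indispensable; the finiteness equivalence then follows since each $S_j$ embeds into the set of Nerode classes of $L$.
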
 

\end{toappendix}

\begin{theoremrep}
\label{thm:sc_closure_interior}
 Let $\Sigma = \{a_1, \ldots, a_k\}$.
 Suppose $L \subseteq \Sigma^*$ is commutative and regular
 with index vector $(i_1, \ldots, i_k)$
 and period vector $(p_1, \ldots, p_k)$.
 Then,
 \[ \max\{\stc(\mathop{\uparrow\!} L), \stc(\mathop{\downarrow\!} L),\stc(\mathop{\downtouparrow\!} L ), \stc(\mathop{\uptodownarrow\!} L) \} \le \prod_{j=1}^k (i_j + p_j).
 \]
\end{theoremrep}
\begin{proofsketch}
 We only give a rough outline of the proof idea
 for the first operation.
 For $L$, as shown in~\cite{GomezA08,Hoffmann2021NISextended,DBLP:conf/cai/Hoffmann19}, we can construct
 an automaton of size $\prod_{j=1}^k (i_j + p_j)$ for $L$
 whose states can be put in correspondence with the modulo and threshold counting
 for the different letters. 
 More formally, it could be shown
 that there exists an automaton
 $\mathcal C = (\Sigma, S_1, \times \ldots \times S_k, \delta, s_0, F)$
 with
 $S_j = \{ 0, \ldots, i_j + p_j - 1 \}$ for $j \in \{1,\ldots, k \}$
 recognizing $L$.

 Construct automaton for $\mathop{\uparrow\!} L$: Set $\mathcal A_{\uparrow} = (\Sigma, S_1 \times \ldots \times S_k, \delta_{\uparrow}, s_0, F_{\uparrow})$
 with
 \begin{align*}
      F_{\uparrow} & = \{ (s_1, \ldots, s_k) \in S_1 \times \ldots \times S_k \mid 
      \\ & \qquad\quad \exists (f_1, \ldots, f_k) \in F \  \forall j \in \{1,\ldots,k\} : f_j \le s_j \}; \\ 
      \delta_{\uparrow}( (s_1, \ldots, s_k), a_j ) & = \left\{ 
  \begin{array}{ll}
   (s_1, \ldots, s_k)              & \mbox{ if } i_j + p_j - 1 = s_j; \\
   \delta((s_1, \ldots, s_k), a_j) & \mbox{ otherwise,} 
  \end{array} 
  \right. 
 \end{align*}
 for $j \in \{1,\ldots,k\}$ and $(s_1, \ldots, s_k) \in S_1 \times\ldots\times S_k$.
 With a similar idea,
 an automaton $\mathcal A_{\downarrow}$
 such that $L(\mathcal A_{\downarrow}) = \mathop{\downarrow\!} L$
 can be constructed.
 See Example~\ref{ex:closure_construction}
 for concrete constructions. \qed
\end{proofsketch}
\begin{proof}
 Let $\mathcal C_L = (\Sigma, S_1, \times \ldots \times S_k, \delta, s_0, F)$
 be the minimal commutative automaton of $L$.
 For notational simplicity, we identify its states
 with tuples of natural numbers. More precisely, we assume
 $S_j = \{ 0, \ldots, i_j + p_j - 1 \}$ for $j \in \{1,\ldots, k \}$, where $n \in S_j$ corresponds to the Nerode right-congruence class $[a_j^n]_{\equiv_L}$. With this identification\footnote{Recall
 that $\mathbb N_0^k$ is equipped with the component-wise order.},
 for $w \in \Sigma^*$ with $\psi(w) \le (i_1 + p_1 - 1, \ldots, i_k + p_k - 1)$,
 we have 
 \begin{equation} \label{eqn:state_identification}
     \delta(s_0, w) = \psi(w).
 \end{equation}
 And for arbitrary $u \in \Sigma^*$,
 \begin{equation} \label{eqn:state_always_smaller_than_parikh_image}
  \delta(s_0, u) \le \psi(u).
 \end{equation}
 Also, if $\delta(s_0, w) = (s_1, \ldots, s_k)$, then, for $j \in \{1,\ldots,k\}$,
  \begin{equation}\label{eqn:pi_j_equals_s_j} 
   ( ( s_j < i_j ) \lor ( |\pi_j(w)| < i_j + p_j) )
   \Rightarrow ( s_j = |\pi_j(w)| \land \delta(s_0, \pi_j(w)) = |\pi_j(w)| ).
  \end{equation}

 \begin{enumerate} 
 \item 
 Set $\mathcal A_{\uparrow} = (\Sigma, S_1 \times \ldots \times S_k, \delta_{\uparrow}, s_0, F_{\uparrow})$
 with, for $j \in \{1,\ldots,k\}$ and $(s_1, \ldots, s_k) \in S_1 \times\ldots\times S_k$,
 \begin{align}
      F_{\uparrow} & = \{ s \in S_1 \times \ldots \times S_k \mid \exists f \in F : f \le s \}; \label{eqn:F_uparrow} \\
      \delta_{\uparrow}( (s_1, \ldots, s_k), a_j ) & = \left\{ 
  \begin{array}{ll}
   (s_1, \ldots, s_k)              & \mbox{ if } i_j + p_j - 1 = s_j; \\
   \delta((s_1, \ldots, s_k), a_j) & \mbox{ otherwise.} 
  \end{array} 
  \right. \label{eqn:delta_uparrow}
 \end{align}
 Then, $\mathcal A_{\uparrow}$ has the following properties: for any $u,v \in \Sigma^*$,
 \begin{equation}
 \label{eqn:A_uparrow}
  \psi(u) \le \psi(v) \Rightarrow \delta_{\uparrow}(s_0, u) \le \delta_{\uparrow}(s_0, v)
  \end{equation}
  and, more generally, for $w \in \Sigma^*$,
  \begin{equation} \label{eqn:A_upparow_transition}
   \delta_{\uparrow}(s_0, w) = (\max\{ |w|_{a_1}, i_1 + p_1 - 1 \}, \ldots, \max\{ |w|_{a_k}, i_k + p_k - 1 \}).
  \end{equation}
  Equation~\eqref{eqn:A_upparow_transition} also implies, for any $w \in \Sigma^*$,
  \begin{equation}\label{eqn:word_greater}
    \delta_{\uparrow}(s_0, w) \le \psi(w).
  \end{equation}

 \begin{claiminproof}
  We have $\operatorname{\uparrow\!} L = L(\mathcal A_{\uparrow})$.
 \end{claiminproof}
 \begin{claimproof}
  Suppose $w \in L(\mathcal A_{\uparrow})$.
  Then, $\delta_{\uparrow}(s_0, w) \in F_{\uparrow}$.
  So, there exists $f \in F$ with $f \le \delta_{\uparrow}(s_0, w)$.
  Let $u \in \Sigma^*$ be minimal with $\delta_{\uparrow}(s_0, u) = f$.
  Then, by Equation~\eqref{eqn:A_upparow_transition} and the minimality of $u$,
  we have, for any $j \in \{1,\ldots,k\}$, that $0 \le |u|_{a_j} \le i_j + p_j - 1$,
  or $\psi(u) \le (i_1 + p_1 - 1, \ldots, i_k + p_k - 1)$.
  So, by Equation~\eqref{eqn:A_upparow_transition},
  we find $f = \delta_{\uparrow}(s_0, u) = \psi(u)$.
  Also, by Equation~\eqref{eqn:state_identification},
  $\delta(s_0, u)  = \psi(u) = f$.
  Hence, $u \in L$.
  Then, combining the previous facts and using Equation~\eqref{eqn:word_greater},
  \[
   \psi(u) = f \le \delta_{\uparrow}(s_0, w) \le \psi(w).
  \]
  So, by Equation~\eqref{eqn:L_comm_upward_downward}, $w \in \mathop{\uparrow\!} L$.

  Conversely, suppose $w \in \mathop{\uparrow\!} L = L \shuffle \Sigma^*$.
  Then, some minimal subsequence $u$ 
  of $w$ drives $\mathcal C_L$ into a final state,
  i.e., $\delta(s_0, u) \in F$. We have $\psi(u) \le \psi(w)$.
  Hence, by Equation~\eqref{eqn:A_uparrow}, 
  \[
   \delta_{\uparrow}(s_0, u) \le \delta_{\uparrow}(s_0, w).
  \] 
  Assume $|u|_{a_j} > i_j + p_j - 1$ for some $j \in \{1,\ldots,k\}$.
  Set $m = |u|_{a_j} - p_j$.
  Then, $\delta(s_0, u) = \delta(s_0, \pi_1(u) \cdots \pi_{j-1}(u) a_j^m \pi_{j+1}(u) \cdots \pi_k(u))$
  and so, by commutativity, by leaving out $p_j$ many times the letter $a_j$
  in $u$ we find a shorter word that ends in the same state.
  However,  this is excluded by minimality of $u$. Hence, $\psi(u) \le (i_1 + p_1 - 1, \ldots, i_k + p_k - 1)$.
  So, by Equation~\eqref{eqn:A_upparow_transition},
  $\psi(u) = \delta_{\uparrow}(s_0, u)$
  and, by Equation~\eqref{eqn:state_identification},
  $\delta(s_0, u) = \psi(u)$.
  Therefore,
  \[
   \delta_{\uparrow}(s_0, u) = \psi(u) = \delta(s_0, u) \in F.
  \]
  As $\delta_{\uparrow}(s_0, u) \le \delta_{\uparrow}(s_0, w)$
  and $\delta_{\uparrow}(s_0, u) \in F$,
  we have  $\delta_{\uparrow}(s_0, w) \in F_{\uparrow}$.
  Hence, $w \in L(\mathcal A_{\uparrow})$.
 \end{claimproof}
 
 \medskip
 
 \item 
 Set $\mathcal A_{\downarrow} = (\Sigma, S_1\times \ldots\times S_k, \delta, s_0, F_{\downarrow})$
 with\footnote{Observe that we retain the transition function of $\mathcal C_L$. In $\mathcal A_{\downarrow}$, only the final
 state set is altered.}
 \[
  F_{\downarrow} = \{ s \in S_1 \times\ldots\times S_k \mid \exists e \in E' : s \le e \},
 \]
 where 
 \begin{align*}
     E' & = \{ (s_1, \ldots, s_k) \in S_1 \times\ldots\times S_k \mid \\ 
        & \qquad \exists (f_1, \ldots, f_k) \in F \ \forall j \in \{1, \ldots, k\} : \\ 
        & \qquad ( ( i_j \le s_j < i_j + p_j ) \land ( i_j \le f_j < i_j + p_j ) ) \lor ( f_j = s_j )\}.
 \end{align*}

 Note that, $(s_1, \ldots, s_k) \in E'$ if and only if there exists $(f_1, \ldots, f_k) \in F$
 such that, for any $j \in \{1,\ldots,k\}$, either
 \[
  ( i_j \le s_j < i_j + p_j ) \land ( i_j \le f_j < i_j + p_j ) 
 \]
 or
 \begin{equation}\label{eqn:max_i}
  ( \max\{ f_j, s_j \} < i_j ) \land ( f_j = s_j ).
 \end{equation}

 \begin{claiminproof}
  We have $\mathop{\downarrow\!} L = L(\mathcal A_{\downarrow})$.
 \end{claiminproof}
 \begin{claimproof}
  Suppose  $w \in L(\mathcal A_{\downarrow})$. 
  Set $(s_1, \ldots, s_k) = \delta(s_0, w)$. We
  have two cases:
  
  \begin{enumerate}
  \item $(s_1, \ldots, s_k) \in E'$.
  
   \medskip 
   
    We first show the auxiliary statement (i), and use this
    to construct a word $u \in L$ with $w \preccurlyeq u$.
     
     \begin{enumerate}
     \item[(i)] We claim that there exists $(f_1, \ldots, f_k) \in F$  such that, for any $j \in \{1,\ldots,k\}$, we can find $m_j \ge 0$
    with $m_j \ge |\pi_j(w)| \ge s_j$ and
    \begin{equation}\label{eqn:m_j}
    \delta(s_0, a_j^{m_j}) = ([\varepsilon]_{\equiv_L}, \ldots, [\varepsilon]_{\equiv_L}, f_j, [\varepsilon]_{\equiv_L}, \ldots, [\varepsilon]_{\equiv_L}).
    \end{equation}
    
     \emph{Proof of (i):} Intuitively, if $(s_1, \ldots, s_k) \in E'$, for some state 
     from $F$, each entry $s_j$, $j \in \{1,\ldots, k\}$,
     either equals the $j$-th entry in the state  from $F$ or shares a cycle with the $j$-th
     entry in the state  from $F$.
     More formally, by the definition of $E'$,
     there exists $f = (f_1, \ldots, f_k) \in F$ such that, for any $j \in \{1,\ldots,k\}$,
     \[
         ( ( i_j \le s_j < i_j + p_j ) \land ( i_j \le f_j < i_j + p_j ) ) \mbox{ or } ( s_j = f_j ).
     \]
     Set $r = |w|$ and define a vector $(m_1, \ldots, m_k) \in \mathbb N^k$ according to the following rules,
     based on the definition of~$E'$,
     \[
      m_j = \left\{ 
      \begin{array}{ll}
       f_j       & \mbox{if } f_j = s_j; \\
       f_j + r\cdot p_j & \mbox{if } ( i_j \le s_j < i_j + p_j ) \land ( i_j \le f_j < i_j + p_j ).
      \end{array}\right.
     \]
     Then, Equation~\eqref{eqn:m_j} holds true. By Equation~\eqref{eqn:state_always_smaller_than_parikh_image},
     we have $s_j \le |\pi_j(w)|$. We only need to show $m_j \ge |\pi_j(w)|$.
     Let $j \in \{1,\ldots,k\}$.
     If $( i_j \le s_j < i_j + p_j ) \land ( i_j \le f_j < i_j + p_j )$, as $p_j > 0$
     and by choice of $r$
     we have $m_j \ge |\pi_j(w)|$.
     Otherwise, we must have $s_j = f_j$ and, by Equation~\eqref{eqn:max_i},
     $s_j < i_j$. So, by Equation~\eqref{eqn:pi_j_equals_s_j}  and the definition of $m_j$,
     $m_j = f_j = s_j = |\pi_j(w)|$. \emph{[End, Proof of (i)]}
     \end{enumerate}
     
    \medskip
     
    Set $u = a_1^{m_1} \cdot\ldots\cdot a_k^{m_k}$.
    By Definition~\ref{def::min_com_aut} and Equation~\eqref{eqn:m_j}, 
    $\delta(s_0, u) = (f_1, \ldots, f_k)$.
    Hence, $u \in L$, and, as $m_j \ge |\pi_j(w)|$, $j \in \{1,\ldots,k\}$,
    we find $w \shuffle \Sigma^* \cap \perm(u) \ne \emptyset$,
    and, as $\perm(u) \subseteq L$, $w \shuffle \Sigma^* \cap L \ne \emptyset$.
    Hence, $w \in \mathop{\downarrow\!} L$.
    
   \medskip
 
  \item $(s_1, \ldots, s_k) \in F_{\downarrow}\setminus E'$.
  
   \medskip 
   
    Then, we find $e = (e_1, \ldots, e_k) \in E'$
    such that $\delta(s_0, w) \le e$.
    Write $e = (e_1,\ldots, e_k)$ and $\delta(s_0, w) = (s_1,\ldots, s_k)$.
    Set \[ u = wa_1^{e_1 - s_1} \cdot\ldots\cdot a_k^{e_k - s_k}. \]
    Then, $\delta(s_0, u) \in E'$.
    By the first case, as this handled the case of
    words that end in a state in $E'$, 
    we can deduce $u \in \mathop{\downarrow\!} L$.
    But then, as $w \preccurlyeq u$, we also have $w \in \mathop{\downarrow\!} L$.
  \end{enumerate}
  
  \medskip 
  
 Conversely, suppose $w \in \mathop{\downarrow\!} L$.
 Then, there exists $u \in L$ such that $u \in w \shuffle \Sigma^*$, which
 implies, for any $j \in \{1,\ldots,k\}$, $|\pi_j(w)| \le |\pi_j(u)|$.
 Set
 \[
  (f_1, \ldots, f_k) = \delta(s_0, u) \mbox{ and } (s_1, \ldots, s_k) = \delta(s_0, w).
 \]
 We define a state $t = (t_1, \ldots, t_k) \in S_1 \times \ldots \times S_k$ 
 as follows. Let $j \in \{1,\ldots, k\}$,
 then define $t_j$ according to the following cases:
 \begin{enumerate}
 \item $f_j < i_j$.  Then, as $|\pi_j(w)| \le |\pi_j(u)|$ and $f_j < i_j$, which also
 implies $s_j < i_j$, we have, by the definition of the transition
 function of $\mathcal C_L$, then $s_j \le f_j < i_j$.
  Set $t_j = f_j$.
 
 \item $i_j \le f_j < i_j + p_j$. Set
  \[
   t_j = \left\{ \begin{array}{ll}
    f_j & \mbox{if } s_j < i_j; \\
    s_j & \mbox{if } i_j \le s_j < i_j + p_j.
   \end{array}\right.
  \]
 \end{enumerate}
 Then, for any $j \in \{1,\ldots,k\}$,
 \[
  ( ( i_j \le t_j < i_j + p_j ) \land ( i_j \le f_j < i_j + p_j ) ) \mbox{ or } ( t_j = f_j ).
 \]
 Therefore, $t \in E'$.
 By definition of $t$, we have $s \le t$. So, we find $s \in F_{\downarrow}$.
 Hence $w \in L(\mathcal A_{\downarrow})$.
 \end{claimproof}
 
 \item Due to Equation~\eqref{eqn:interior},
  by switching the final state set of an automaton for $\mathop{\downarrow\!} (\Sigma^* \setminus L)$,
  we get a recognizing automaton for $\mathop{\downtouparrow\!} L$. 
  So, this is implied by the state complexity bound for $\mathop{\downarrow\!} L$.
 
 \item By Equation~\eqref{eqn:interior},
 similarly as for  $\mathop{\downtouparrow\!} L$, the state complexity bound
 is implied.
 \end{enumerate}
 So, all the state complexity bounds are established.~\qed
\end{proof}

\begin{example}
\label{ex:closure_construction}
 Let $L = bb(bb)^* \cup ( b \shuffle a(aa)^* )$.
 Then, $\mathop{\uparrow\!} L = bbb^* \cup ( bb^* \shuffle aa^* )$,
 $\mathop{\downarrow\!} L = b^* \cup ( aa^* \shuffle \{\varepsilon,b\} )$,
 $\mathop{\downtouparrow\!} L   = \emptyset$ and
 $\mathop{\uptodownarrow\!} L = \emptyset$.
 The constructions of the automata $\mathcal A_{\uparrow}$
 and $\mathcal A_{\downarrow}$ 
 with $L(\mathcal A_{\uparrow}) = \mathop{\uparrow\!} L$
 and $L(\mathcal A_{\downarrow}) = \mathop{\downarrow\!} L$
 from the proof sketch of Theorem~\ref{thm:sc_closure_interior}
 are illustrated, for the example language $L$, in Figure~\ref{fig:closure_construction}.
 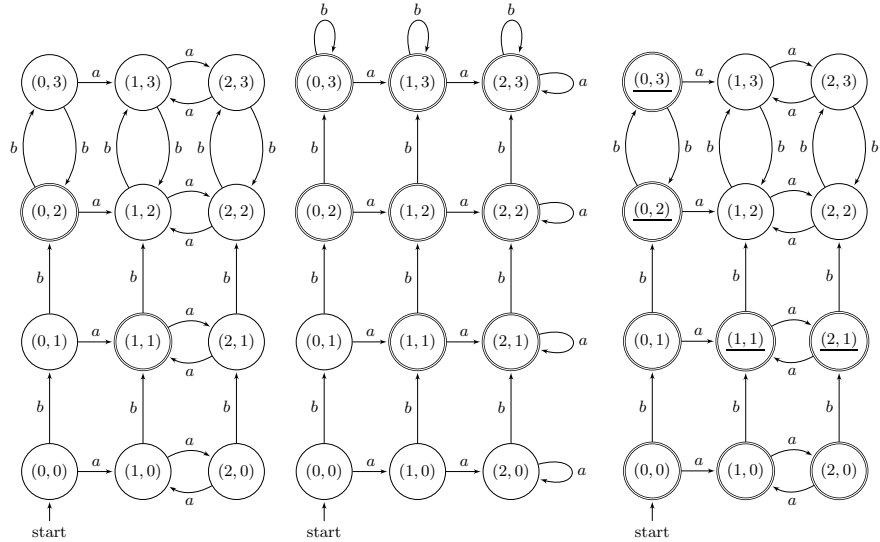
\begin{figure}[h]
     \centering
    \scalebox{.69}{    
 \begin{tikzpicture}[>=latex',shorten >=1pt,node distance=2.5cm and 1.8cm,on grid,auto]
  \node[state, initial below]            (1)                {$(0,0)$};
  \node[state]                     (2) [right =of 1]   {$(1,0)$};
  \node[state]                     (3) [right =of 2]   {$(2,0)$};
  \node[state]                     (4) [above =of 1]   {$(0,1)$};
  \node[state,accepting]           (5) [right =of 4]   {$(1,1)$};
  \node[state]                     (6) [right =of 5]   {$(2,1)$}; 
  \node[state,accepting]           (7) [above =of 4]   {$(0,2)$};
  \node[state]                     (8) [right =of 7]   {$(1,2)$};
  \node[state]                     (9) [right =of 8]   {$(2,2)$}; 
  \node[state]                     (10) [above =of 7]  {$(0,3)$};
  \node[state]                     (11) [right =of 10] {$(1,3)$};
  \node[state]                     (12) [right =of 11] {$(2,3)$}; 

  \path[->] (1) edge node {$a$} (2)
            (2) edge [bend left] node {$a$} (3)
            (3) edge [bend left] node {$a$} (2);

  \path[->] (4) edge node {$a$} (5)
            (5) edge [bend left] node {$a$} (6)
            (6) edge [bend left] node {$a$} (5);
            
  \path[->] (7) edge node {$a$} (8)
            (8) edge [bend left] node {$a$} (9)
            (9) edge [bend left] node {$a$} (8);
    
  \path[->] (10) edge node {$a$} (11)
            (11) edge [bend left] node {$a$} (12)
            (12) edge [bend left] node {$a$} (11);

  \path[->] (1) edge node {$b$} (4)
            (4) edge node {$b$} (7)
            (7) edge [bend left] node {$b$} (10)
           (10) edge [bend left] node {$b$} (7);
 
  \path[->] (2) edge node {$b$} (5)
            (5) edge node {$b$} (8)
            (8) edge [bend left] node {$b$} (11)
           (11) edge [bend left] node {$b$} (8);
           
  \path[->] (3) edge node {$b$} (6)
            (6) edge node {$b$} (9)
            (9) edge [bend left] node {$b$} (12)
           (12) edge [bend left] node {$b$} (9);
  
 \end{tikzpicture}}
 \scalebox{.69}{   
 \begin{tikzpicture}[>=latex',shorten >=1pt,node distance=2.5cm and 1.8cm,on grid,auto]
  \node[state, initial below]            (1)                {$(0,0)$};
  \node[state]                     (2) [right =of 1]   {$(1,0)$};
  \node[state]                     (3) [right =of 2]   {$(2,0)$};
  \node[state]                     (4) [above =of 1]   {$(0,1)$};
  \node[state,accepting]                     (5) [right =of 4]   {$(1,1)$};
  \node[state,accepting]                     (6) [right =of 5]   {$(2,1)$}; 
  \node[state,accepting]                     (7) [above =of 4]   {$(0,2)$};
  \node[state,accepting]                     (8) [right =of 7]   {$(1,2)$};
  \node[state,accepting]                     (9) [right =of 8]   {$(2,2)$}; 
  \node[state,accepting]                     (10) [above =of 7]  {$(0,3)$};
  \node[state,accepting]                     (11) [right =of 10] {$(1,3)$};
  \node[state,accepting]                     (12) [right =of 11] {$(2,3)$}; 
  
  \path[->] (1) edge node {$a$} (2)
            (2) edge node {$a$} (3)
            (3) edge [loop right] node {$a$} (3);

  \path[->] (4) edge node {$a$} (5)
            (5) edge node {$a$} (6)
            (6) edge [loop right] node {$a$} (5);
            
  \path[->] (7) edge node {$a$} (8)
            (8) edge  node {$a$} (9)
            (9) edge [loop right] node {$a$} (8);
    
  \path[->] (10) edge node {$a$} (11)
            (11) edge  node {$a$} (12)
            (12) edge [loop right] node {$a$} (11);

  \path[->] (1) edge node {$b$} (4)
            (4) edge node {$b$} (7)
            (7) edge  node {$b$} (10)
           (10) edge [loop above] node {$b$} (7);
 
  \path[->] (2) edge node {$b$} (5)
            (5) edge node {$b$} (8)
            (8) edge  node {$b$} (11)
           (11) edge [loop above] node {$b$} (8);
           
  \path[->] (3) edge node {$b$} (6)
            (6) edge node {$b$} (9)
            (9) edge  node {$b$} (12)
           (12) edge [loop above] node {$b$} (9);
 \end{tikzpicture}}
 \scalebox{.69}{   
 \begin{tikzpicture}[>=latex',shorten >=1pt,node distance=2.5cm and 1.8cm,on grid,auto]
  \node[state, initial below,accepting]            (1)                {$(0,0)$};
  \node[state,accepting]                     (2) [right =of 1]   {$(1,0)$};
  \node[state,accepting]                     (3) [right =of 2]   {$(2,0)$};
  \node[state,accepting]                     (4) [above =of 1]   {$(0,1)$};
  \node[state,accepting]           (5) [right =of 4]   {\underline{$(1,1)$}};
  \node[state,accepting]           (6) [right =of 5]   {\underline{$(2,1)$}}; 
  \node[state,accepting]           (7) [above =of 4]   {\underline{$(0,2)$}};
  \node[state]                     (8) [right =of 7]   {$(1,2)$};
  \node[state]                     (9) [right =of 8]   {{$(2,2)$}}; 
  \node[state,accepting]                     (10) [above =of 7]  {\underline{$(0,3)$}};
  \node[state]                     (11) [right =of 10] {$(1,3)$};
  \node[state]                     (12) [right =of 11] {$(2,3)$}; 
  
   \path[->] (1) edge node {$a$} (2)
            (2) edge [bend left] node {$a$} (3)
            (3) edge [bend left] node {$a$} (2);

  \path[->] (4) edge node {$a$} (5)
            (5) edge [bend left] node {$a$} (6)
            (6) edge [bend left] node {$a$} (5);
            
  \path[->] (7) edge node {$a$} (8)
            (8) edge [bend left] node {$a$} (9)
            (9) edge [bend left] node {$a$} (8);
    
  \path[->] (10) edge node {$a$} (11)
            (11) edge [bend left] node {$a$} (12)
            (12) edge [bend left] node {$a$} (11);

  \path[->] (1) edge node {$b$} (4)
            (4) edge node {$b$} (7)
            (7) edge [bend left] node {$b$} (10)
           (10) edge [bend left] node {$b$} (7);
 
  \path[->] (2) edge node {$b$} (5)
            (5) edge node {$b$} (8)
            (8) edge [bend left] node {$b$} (11)
           (11) edge [bend left] node {$b$} (8);
           
  \path[->] (3) edge node {$b$} (6)
            (6) edge node {$b$} (9)
            (9) edge [bend left] node {$b$} (12)
           (12) edge [bend left] node {$b$} (9);
 \end{tikzpicture}}
  \caption{Construction of automata
   for the upward and downward closure
   for the language $L = bb(bb)^* \cup (b \shuffle a(aa)^*)$
   by starting from an automaton in a ``rectangular'' normal form for $L$. The left-most automaton recognizes $L$,
   the automaton in the middle recognized
   $\mathop{\uparrow\!} L$
   and the right-most automaton recognized $\mathop{\downarrow\!} L$.
   See Example~\ref{ex:closure_construction} for details.}
   \label{fig:closure_construction}
\end{figure}

\end{example}

The constructions done in the proof of Theorem~\ref{thm:sc_closure_interior}
can actually be performed in polynomial time.

\begin{corollary}
\label{cor:poly_alg_closures}
 Fix the alphabet $\Sigma$.
 Let $L \subseteq \Sigma^*$ be commutative and regular, given 
 by a finite recognizing automaton with $n$ states.
 Then, recognizing automata for $\mathop{\uparrow\!} L$, $\mathop{\downarrow\!} L$,
 $\mathop{\uptodownarrow\!} L$ and $\mathop{\downtouparrow\!} L$
 are computable in polynomial time in $n$.
\end{corollary}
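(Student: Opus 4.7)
The plan is to show that each construction from the proof of Theorem~\ref{thm:sc_closure_interior} can be executed in time polynomial in $n$ once $k = |\Sigma|$ is treated as a constant. The bound from Theorem~\ref{thm:sc_index_period} gives $i_j + p_j \le \stc(L) \le n$ for each $j$, so the commutative automaton $\mathcal C_L$ and all its derived automata have at most $n^k$ states, which is polynomial for fixed $k$; what remains is to argue that the relevant data can be extracted and the outputs assembled in this time.

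First I would minimize the input automaton in polynomial time by the standard partition-refinement procedure to obtain $\mathcal A_L$. Then, for each letter $a_j$, I would trace the sequence of states reached in $\mathcal A_L$ by $a_j^0, a_j^1, a_j^2, \ldots$ until the first repetition. Since distinct states of $\mathcal A_L$ represent distinct Nerode classes, this trace directly yields the values $i_j$ and $p_j$ (transient length and cycle length) together with explicit word representatives $a_j^0, a_j^1, \ldots, a_j^{i_j+p_j-1}$ for the elements of $S_j$. Each trace has length at most $n+1$, so all $k$ orbits are produced in $O(kn)$ time.

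Given the index and period vectors and the per-letter representatives, I would assemble $\mathcal C_L$ from Definition~\ref{def::min_com_aut} directly: the state set is $S_1 \times \ldots \times S_k$ of size $\prod_{j=1}^k(i_j+p_j) \le n^k$, and the transitions act componentwise, so they are computable in a single sweep. To identify the final set $F$, I would use that, by commutativity, $(s_1,\ldots,s_k) \in F$ iff the word $a_1^{n_1} a_2^{n_2} \cdots a_k^{n_k}$ built from the stored representatives is accepted by the input automaton; one simulation per tuple suffices, for a total of $O(n^k \cdot \mathrm{poly}(n))$ work.

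Once $\mathcal C_L$ is built, the automata $\mathcal A_{\uparrow}$ and $\mathcal A_{\downarrow}$ from the proof of Theorem~\ref{thm:sc_closure_interior} are obtained by a single pass over the states of $\mathcal C_L$, amending the transition function as in Equation~\eqref{eqn:delta_uparrow} and the accepting set as in Equation~\eqref{eqn:F_uparrow} and its downward analogue. The two interior operations are handled via Equation~\eqref{eqn:interior}: complement (which on a complete, commutative DFA just swaps the accepting set and preserves commutativity), apply the corresponding closure construction, and complement again. All of these manipulations run in polynomial time. The main subtlety I foresee is that the input is not given in any commutative normal form, so one must first recover the Nerode representatives on each $a_j^*$ from an arbitrary recognizing automaton; minimization followed by per-letter orbit tracing handles this cleanly, and the remainder is routine bookkeeping over the product state set.
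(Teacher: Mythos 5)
Your proposal is correct and matches the paper's intent exactly: the corollary is asserted directly from the constructions in the proof of Theorem~\ref{thm:sc_closure_interior}, and your elaboration (minimize, trace the per-letter orbits to extract the index and period vectors with representatives, assemble $\mathcal C_L$ of size at most $n^{|\Sigma|}$, then modify transitions and final states as in Equations~\eqref{eqn:F_uparrow}--\eqref{eqn:delta_uparrow} and handle the interiors via Equation~\eqref{eqn:interior} by complementation) is precisely the routine verification the paper leaves implicit. No gap.
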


With Theorem~\ref{thm:sc_index_period}, we can derive the next bound from Theorem~\ref{thm:sc_closure_interior}.

\begin{theorem} 
\label{thm:sc_upper_bound_closure_interior}
 Let $L \subseteq \Sigma^*$ be commutative and 
 recognizable by an automaton with $n$ states.
 Then, the upward and downward closures and interiors
 of $L$ are recognizable by automata of size $n^{|\Sigma|}$.
\end{theorem}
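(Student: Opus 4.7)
The plan is to derive this directly from the two theorems already in place: Theorem~\ref{thm:sc_closure_interior}, which bounds the state complexity of each of the four closure/interior operations by $\prod_{j=1}^{k}(i_j + p_j)$ in terms of the index and period vectors of $L$, and Theorem~\ref{thm:sc_index_period}, which bounds each individual factor $i_j + p_j$ by $\stc(L)$.

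First, I would let $(i_1,\ldots,i_k)$ and $(p_1,\ldots,p_k)$ denote the index and period vectors of $L$, where $k = |\Sigma|$. Since $L$ is recognized by an automaton with $n$ states, its state complexity satisfies $\stc(L) \le n$. By Theorem~\ref{thm:sc_index_period}, for every $j \in \{1,\ldots,k\}$ we have $i_j + p_j \le \stc(L) \le n$. Taking the product over all coordinates then gives
\[
 \prod_{j=1}^{k}(i_j + p_j) \le n^{k} = n^{|\Sigma|}.
\]

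Next, I would invoke Theorem~\ref{thm:sc_closure_interior}, which tells us that each of the four quantities $\stc(\mathop{\uparrow\!} L)$, $\stc(\mathop{\downarrow\!} L)$, $\stc(\mathop{\downtouparrow\!} L)$ and $\stc(\mathop{\uptodownarrow\!} L)$ is bounded above by $\prod_{j=1}^{k}(i_j + p_j)$, hence by $n^{|\Sigma|}$. This yields recognizing automata of size at most $n^{|\Sigma|}$ for each of the four closure and interior languages, which is exactly what the statement asserts.

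There is essentially no obstacle here: all of the real work has already been done in the earlier theorems (in particular, the careful construction of $\mathcal{A}_\uparrow$ and $\mathcal{A}_\downarrow$ and the reduction of the interior operations via Equation~\eqref{eqn:interior} in Theorem~\ref{thm:sc_closure_interior}, and the bound on $i_j + p_j$ in Theorem~\ref{thm:sc_index_period}). The only point worth being explicit about is the uniformity of the bound $\stc(L) \le n$ across all coordinates $j$, which is what allows the product to collapse to $n^{|\Sigma|}$ rather than a mixed product depending on the individual factors.
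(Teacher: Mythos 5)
Your derivation is correct and matches the paper's intent exactly: the paper obtains Theorem~\ref{thm:sc_upper_bound_closure_interior} precisely by combining Theorem~\ref{thm:sc_closure_interior} with the bound $i_j + p_j \le \stc(L) \le n$ from Theorem~\ref{thm:sc_index_period}. Nothing is missing.
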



\todo{Paragraph auskommentiert}

\section{The Case of Commutative Group Languages}

Before we investigate the state complexity of shuffle, union, intersection and the closure and interior operations
for commutative group languages, we give 
a normal form theorem for commutative groups languages similar to Theorem~\ref{thm:reg_commutative_form}.

\begin{theoremrep}
 Let $\Sigma = \{a_1, \ldots, a_k\}$
 and $L \subseteq \Sigma^*$.
 Then, the following conditions are equivalent:
 \begin{enumerate}
 \item $L$ is a commutative group language; \todo{auch noch klammern um items?}
 \item $L$ is a finite union of languages of the form 
  $
   U_1 \shuffle \ldots \shuffle U_k
  $
  with $U_j \subseteq \{ a_j \}^*$ a group language
  recognizable by an automaton with a single final state;
  \item $L$ is a finite union of languages of the form 
  $ 
   U_1 \shuffle \ldots \shuffle U_k
  $
  with $U_j \subseteq \{ a_j \}^*$ being group languages.
 \end{enumerate}
\end{theoremrep}
\begin{proof}
 Proof that (1) implies (2): Let $L \subseteq \Sigma^*$ be a commutative group language
 with recognizing commutative permutation automaton $\mathcal A = (\Sigma, Q, \delta, q_0, F)$.
 Then, for $q_f \in F$,
 \[
  L((\Sigma, Q, \delta, q_0, \{q_f\})) = \bigshuffle_{j=1}^k \pi_j(L((\Sigma, Q, \delta, q_0, \{q_f\}))).
 \]
 This equation follows easily, the inclusion of the set on the left hand side in the set on the right hand side
 is true for any language. The other inclusion is implied for if we have a word
 in this language, then some permutation of it is in $L((\Sigma, Q, \delta, q_0, \{q_f\}))$,
 but as the automaton is commutative the original automaton also ends up in the state $q_f$.
 By~\cite[Corollary 9]{DBLP:conf/cai/Hoffmann19}, the languages
 $\pi_j(L((\Sigma, Q, \delta, q_0, \{q_f\})))$ are unary group languages.
 These unary languages can be written as unions of languages recognizable by automata
 with a single final state. Then, as shuffle distributes over union,
 the claim follows.

 Proof that (2) implies (3): This is clear.

 Proof that (3) implies (1): By~\cite[Lemma 13]{DBLP:conf/cai/Hoffmann19}, the languages
 $U_1 \shuffle \ldots \shuffle U_k$ are group languages
 and, as the $U_j$ are unary, these languages are commutative
 So, as the group languages and the commutative languages are closed under union~\cite{DBLP:reference/hfl/Pin97},
 we can deduce that $L$ is a commutative group language.\qed 
\end{proof}

\subsection{The Shuffle Operation}

Here, we give a sharp bound for the state complexity
of two commutative group languages.
However, in this case,  we do not express the bound in terms of the size
of recognizing input automata, but in terms of the index and period vectors
of the input languages. The result generalizes a corresponding result from~\cite{PighizziniS02}
for unary group languages to commutative group languages.


\begin{toappendix}
For lower bound results, we also need the next results.

\begin{lemma}
\label{lem:lower_bound_sc}
 Let $\Sigma = \{a_1, \ldots, a_k\}$ and $(n_1, \ldots, n_k) \in \mathbb N_0^k$.
 Suppose for a commutative language $L \subseteq \Sigma^*$ we have
 \begin{enumerate}
 \item $\{ w \in \Sigma^* \mid \forall j \in \{1,\ldots, k\} : |w|_{a_j} \ge n_j \} \subseteq L$,
 \item $\{ w \in \Sigma^* \mid \exists j \in \{1,\ldots, k\} : |w|_{a_j} = \min\{n_j - 1,0\}  \} \cap L = \emptyset$.
 \end{enumerate}
 Then 
 $
  \stc(L) = \prod_{j=1}^k (n_j + 1)
 $
 with index vector $(n_1, \ldots, n_k)$
 and period vector $(1,\ldots, 1)$.
\end{lemma}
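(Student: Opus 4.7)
The plan is to prove the claim in two movements: first derive the index and period vectors, and then match the upper bound from Theorem~\ref{thm:sc_index_period} with a lower bound obtained by exhibiting $\prod_{j=1}^k (n_j+1)$ pairwise Nerode-inequivalent words.

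For the index and period, fix $j \in \{1,\ldots,k\}$. To show $i_j \le n_j$ and $p_j \le 1$, I would verify that $a_j^{n_j} \equiv_L a_j^{n_j+1}$: for any extension $x \in \Sigma^*$, the two resulting words have identical $a_i$-counts for $i \ne j$ and both have at least $n_j$ copies of $a_j$; by commutativity and conditions~(1) and~(2), membership in $L$ for either word then depends only on whether every other count reaches its threshold, so the two are Nerode-equivalent. To show $i_j \ge n_j$, the extension $x = \prod_{i \ne j} a_i^{n_i}$ witnesses $a_j^{n_j-1} \not\equiv_L a_j^{n_j}$: the word $a_j^{n_j} x$ hits every threshold and so lies in $L$ by condition~(1), while $a_j^{n_j-1} x$ falls one short on the $a_j$-count and so is excluded by condition~(2). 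This yields index vector $(n_1, \ldots, n_k)$ and period vector $(1,\ldots,1)$, and Theorem~\ref{thm:sc_index_period} then gives the upper bound $\stc(L) \le \prod_{j=1}^k (n_j + 1)$.

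For the matching lower bound, I would consider the $\prod_{j=1}^k (n_j+1)$ candidate representatives $w_{(m_1, \ldots, m_k)} := a_1^{m_1} \cdots a_k^{m_k}$ with $0 \le m_j \le n_j$, and show they are pairwise Nerode-inequivalent. Given two distinct tuples that, without loss of generality, disagree in coordinate $j$ with $m_j < m_j'$, the distinguishing continuation $x := a_1^{n_1-m_1'} \cdots a_k^{n_k-m_k'}$ (well-defined because $m_i' \le n_i$ for every $i$) saturates $w_{(m_1',\ldots,m_k')} x$ at exactly $n_i$ copies of each $a_i$, placing this word in $L$ by condition~(1). Meanwhile $w_{(m_1,\ldots,m_k)} x$ has only $n_j - (m_j' - m_j) < n_j$ occurrences of $a_j$, and thus falls in the set excluded by condition~(2). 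This gives the required $\prod_{j=1}^k (n_j+1)$ distinct Nerode classes and closes the proof.

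The key point to get right is the reading of condition~(2): it must be strong enough to rule out every word whose $a_j$-count is strictly below the threshold $n_j$, since the distinguishing continuation above leaves the ``wrong'' candidate below threshold by an arbitrary amount between $1$ and $m_j' - m_j$. Once that is secured, everything else is a careful bookkeeping of letter counts, leveraging commutativity to rearrange prefixes and continuations at will.
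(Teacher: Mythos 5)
Your overall architecture (derive the index and period vectors, get the upper bound from Theorem~\ref{thm:sc_index_period}, then exhibit $\prod_{j}(n_j+1)$ pairwise inequivalent representatives) is in the spirit of the paper's proof, but two of your key steps rely on a reading of condition~(2) that the hypothesis does not supply, and this is a genuine gap rather than a matter of presentation. As written, (2) excludes only words in which some $a_j$ occurs exactly $\min\{n_j-1,0\}$ times; under the evidently intended reading (the one used when the lemma is applied in Theorem~\ref{thm:sc_comm_grp_lang}) it excludes an occurrence count of exactly $n_j-1$. Neither reading supports your claim that membership of $a_j^{n_j}x$ and $a_j^{n_j+1}x$ ``depends only on whether every other count reaches its threshold'': take $\Sigma=\{a,b\}$, $(n_1,n_2)=(3,3)$ and $L=\{w : |w|_a=3,\ |w|_b=1\}\cup\{w : |w|_a\ge3,\ |w|_b\ge3\}$; both conditions hold (no word of $L$ has a letter count equal to $0$ or to $2$), yet $a^3b\in L$ while $a^4b\notin L$, so $a^3\not\equiv_L a^4$. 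The same issue hits your separating step: after appending $x=a_1^{n_1-m_1'}\cdots a_k^{n_k-m_k'}$, the ``wrong'' word has $a_j$-count $n_j-(m_j'-m_j)$, an arbitrary value strictly below $n_j$ (and every coordinate $i$ with $m_i<m_i'$ also ends below $n_i$), and (1)--(2) say nothing about such words. A minimal failure: $k=1$, $n_1=3$, $L=\{a\}\cup a^3a^*$ satisfies (1) and (2), but for the pair $(m_1,m_1')=(1,3)$ your continuation is $\varepsilon$, which separates nothing since $a\in L$. You flag at the end that you need (2) to exclude \emph{every} word with some count below $n_j$, but that stronger hypothesis cannot simply be adopted: it would force $L$ to be exactly $\{w:\forall j\ |w|_{a_j}\ge n_j\}$, and it is violated by the witness language in Theorem~\ref{thm:sc_comm_grp_lang}, which contains $a_1^{p+q-2}\cdots a_k^{p+q-2}$ with all counts below $n_j=pq-1$.

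The paper's argument avoids the second problem by a different choice of continuation: to separate two representatives differing in coordinate $j$ (with $l_j$ the smaller $j$-entry) it appends $a_1^{n_1}\cdots a_{j-1}^{n_{j-1}}a_j^{\,n_j-1-l_j}a_{j+1}^{n_{j+1}}\cdots a_k^{n_k}$, so every coordinate $i\ne j$ is pushed to at least $n_i$, the larger word reaches at least $n_j$ in coordinate $j$ and lies in $L$ by (1), while the smaller word sits at exactly $n_j-1$ and is excluded by (2) in its intended reading; only membership facts actually guaranteed by the hypotheses are invoked. Redesigning your separating words along these lines repairs your lower-bound movement. Your first movement (the equivalence $a_j^{n_j}\equiv_L a_j^{n_j+1}$, hence the index/period claim and, via Theorem~\ref{thm:sc_index_period}, the matching upper bound) cannot be obtained from (1)--(2) alone, as the two-letter example shows; at that point one needs an additional property of $L$, for instance that membership depends only on the capped counts $\min\{|w|_{a_j},n_j\}$, which does hold in the paper's application because the language there is a shuffle of cofinite unary languages.
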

\begin{proof}
 Set $\mathcal C = (\Sigma, [n_1+1] \times \ldots \times [n_k+1], \delta, (0,\ldots, 0), F)$
 with
 \begin{multline*}
  \delta((i_1, \ldots, i_{j-1}, i_j, i_{j+1}, \ldots, i_k), a_j) \\
   = (i_1, \ldots, i_{j-1}, (i_j + 1) \bmod (n_j+1), i_{j+1},\ldots, i_k).
 \end{multline*}
 and $F = \{ (\min\{ |w|_{a_1}, n_1 \}, \ldots, \min\{ |w|_{a_k}, n_k \}) \mid w \in L \}$.
 Then $L(\mathcal C) = L$.
 Let $(i_1, \ldots, i_k), (l_1, \ldots, l_k) \in [n_1+1] \times \ldots \times [n_k+1]$
 be distinct. Suppose, without loss of generality, $i_j > l_j$
 for some $j \in \{1,\ldots, k\}$.
 This implies $0 \le l_j < n_j$.
 Then
 $$
  \delta((l_1, \ldots, l_k), a_1^{n_1} \cdots a_{j-1}^{n_{j-1}} a_j^{n_j - 1 - l_j} a_{j+1}^{n_{j+1}} \cdots a_k^{n_k}) \notin F
 $$
 but
 $$ 
  \delta((i_1, \ldots, i_k), a_1^{n_1} \cdots a_{j-1}^{n_{j-1}} a_j^{n_j - 1 - l_j} a_{j+1}^{n_{j+1}} \cdots a_k^{n_k}) \in F
 $$
 Hence, all states are distinguishable and $\mathcal C$
 is isomorphic to the minimal automaton of $L$, which proves the claim. \qed
\end{proof}

\begin{lemma}(Refining a result from \cite{PighizziniS02})
\label{lem:pighizziniS02_unary_group_lang_sc}
 Let $\mathcal A = (\{a\}, Q, \delta, s_0, F)$
 and $\mathcal B = (\{a\}, P, \mu, t_0, E)$ be two unary automata
 with index zero and periods $p$ and $q$ respectively.
 Write $F = \{ f_1, \ldots, f_n\}$, $E = \{e_1, \ldots, e_m\}$.
 Then $L(\mathcal A) L(\mathcal B)$
 could be accepted by an automaton $\mathcal C = (\{a\}, R, \eta, r_0, T)$
 with index $\lcm(p,q) - 1$, period $\gcd(p,q)$ and 
 $T = \bigcup_{l=1}^n \bigcup_{h=1}^m T_{l,h}$ such
 that
 $$
  \{ w \mid \eta(r_0, w) \in T_{l, h} \} = \{ u \mid \delta(s_0, u) = f_l \} \cdot \{ v \mid \mu(t_0, v) = e_h\}
 $$
 for $l \in \{1,\ldots, n\}$ and $h \in \{1,\ldots, m\}$.
 This result is optimal in the sense that
 there exists automata $\mathcal A$ and $\mathcal B$
 as above such that $\mathcal C$ is isomorphic
 to the minimal automaton of $L(\mathcal A)L(\mathcal B)$.
\end{lemma}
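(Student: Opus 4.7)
The plan is to construct $\mathcal C$ explicitly as a unary ``rho-shaped'' automaton with a tail of $\lcm(p,q) - 1$ states followed by a cycle of $\gcd(p,q)$ states, to label its accepting blocks according to the residues $r_l + s_h \pmod{\gcd(p,q)}$, and to reduce everything to a single number-theoretic claim controlled by the Sylvester--Frobenius theorem.

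First, using that $\mathcal A$ has index zero and period $p$, I identify $Q = \mathbb{Z}/p\mathbb{Z}$ with $s_0 = 0$, so each $f_l$ corresponds to a residue $r_l \in \{0,\ldots,p-1\}$ with $\{u : \delta(s_0, u) = f_l\} = \{a^i : i \equiv r_l \pmod{p}\}$; analogously each $e_h$ corresponds to $s_h \in \{0,\ldots,q-1\}$. Set $d = \gcd(p,q)$, $M = \lcm(p,q)$, $p = dp'$, $q = dq'$ with $\gcd(p',q') = 1$. Then each target block takes the form
\[
 N_{l,h} := \{u : \delta(s_0, u) = f_l\} \cdot \{v : \mu(t_0, v) = e_h\} = \{a^k : k \in r_l + s_h + \langle p, q\rangle\},
\]
where $\langle p, q \rangle = \{\alpha p + \beta q : \alpha,\beta \geq 0\}$ is the numerical semigroup generated by $p$ and $q$. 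Next I would define $\mathcal C$ on $R = \{0, 1, \ldots, M + d - 2\}$ with $r_0 = 0$, $\eta(i, a) = i + 1$ for $i < M + d - 2$, $\eta(M + d - 2, a) = M - 1$, and
\[
 T_{l,h} = \{k \in \{0, \ldots, M-2\} : k \in N_{l,h}\} \cup \{j \in \{M-1, \ldots, M+d-2\} : j \equiv r_l + s_h \pmod{d}\}.
\]
By construction $\mathcal C$ has index $M - 1$ and period $d$, and verifying $\{w : \eta(r_0, w) \in T_{l,h}\} = N_{l,h}$ reduces, on tail states, to the definition itself and, on cycle states, to a single claim.

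The hard part will be the following: \emph{for every $k \geq M - 1$, $k \in N_{l,h}$ if and only if $k \equiv r_l + s_h \pmod{d}$.} The forward direction is immediate since $k - r_l - s_h \in d\mathbb{Z}$. For the converse, I would invoke Sylvester--Frobenius (Chicken McNugget): because $\gcd(p', q') = 1$, every integer strictly greater than $p'q' - p' - q'$ lies in $\langle p', q' \rangle$, hence every multiple of $d$ strictly greater than $d(p'q' - p' - q') = M - p - q$ lies in $\langle p, q \rangle$. If $k \geq r_l + s_h$ and $k \equiv r_l + s_h \pmod{d}$, then $k - r_l - s_h \geq (M - 1) - (p + q - 2) = M - p - q + 1 > M - p - q$, so $k - r_l - s_h \in \langle p, q \rangle$ and $k \in N_{l,h}$. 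The complementary case $k < r_l + s_h$ would force $r_l + s_h \geq k + d \geq M + d - 1$, contradicting $r_l + s_h \leq p + q - 2 \leq M + d - 2$; the latter inequality is equivalent to the elementary $(p - d)(q - d) \geq 0$. Taking the union over $(l, h)$ then yields $L(\mathcal C) = L(\mathcal A) L(\mathcal B)$ together with the stated preimage description of each $T_{l,h}$.

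For the optimality statement I would exhibit a concrete witness: take $\mathcal A$ and $\mathcal B$ each with a single final state placed at the last cycle position, i.e.\ $n = m = 1$, $r_1 = p - 1$, $s_1 = q - 1$. Then each ``missing'' multiple of $d$ in $N_{1,1} = (p + q - 2) + \langle p, q \rangle$ sits at a distinct tail state and furnishes a separating suffix, while the $d$ cycle states are forced apart by distinct acceptance patterns on residues modulo $d$; a direct case analysis then shows that all $M + d - 1$ states of $\mathcal C$ are pairwise Nerode-inequivalent, so $\mathcal C$ coincides with the minimal deterministic automaton of $L(\mathcal A) L(\mathcal B)$. The main obstacle throughout is the number-theoretic step: once the Sylvester--Frobenius bound and the inequality $p + q \leq M + d$ are in place, the remaining bookkeeping is routine.
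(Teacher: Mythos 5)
Your proposal is correct, but it takes a genuinely different route from the paper. The paper does not reconstruct the automaton at all: it invokes Theorem~8 of Pighizzini and Shallit (the cited result) for both the existence of an automaton of index $\lcm(p,q)-1$ and period $\gcd(p,q)$ accepting the concatenation of two single-final-state cyclic languages and for the optimality claim, and then only proves the ``gluing'' step that the lemma adds: for each pair $(f_l,e_h)$ one may take the same underlying unary automaton $(\{a\},R,\eta,r_0)$ (since index and period determine the shape of a unary automaton), so the blocks $T_{l,h}$ live in one automaton, and $L(\mathcal A)L(\mathcal B)$ is obtained from $T=\bigcup_{l,h}T_{l,h}$ because concatenation distributes over union. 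You instead rebuild everything from scratch: identifying the index-zero automata with cyclic groups, constructing the tail-plus-cycle automaton of size $\lcm(p,q)-1+\gcd(p,q)$ explicitly, and reducing correctness to the Sylvester--Frobenius bound via the inequality $p+q\le\lcm(p,q)+\gcd(p,q)$; your number-theoretic verification (including the case split $k\ge r_l+s_h$ versus $k<r_l+s_h$) is sound, and in your construction the block-wise preimage description of the $T_{l,h}$ comes for free, whereas the paper has to argue it separately. What each buys: the paper's proof is short and defers the hard content to the literature; yours is self-contained and makes the final-state structure transparent, at the cost of redoing the number theory that underlies the cited theorem. Two small points to tighten: your optimality part is only sketched --- the clean way to finish it is to note that with $r_1=p-1$, $s_1=q-1$ the language has minimal eventual period exactly $\gcd(p,q)$ and minimal preperiod exactly $\lcm(p,q)-1$ (since $a^{\lcm(p,q)-2}\notin L$ because the scaled Frobenius number is not representable, while $a^{\lcm(p,q)-2+\gcd(p,q)}\in L$), so the minimal automaton already has $\lcm(p,q)-1+\gcd(p,q)$ states and must coincide with $\mathcal C$; and your identification $Q=\mathbb{Z}/p\mathbb{Z}$ tacitly assumes all final states are reachable --- unreachable $f_l$ would simply get $T_{l,h}=\emptyset$, which is worth one sentence.
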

\begin{proof} The existence and optimality was stated in~\cite{PighizziniS02} as Theorem 8,
we only show the additional part about the final states.
With the notation from the statement,
set $\mathcal A_l = (\{a\}, Q, \delta, s_0, \{f_l\})$
and $\mathcal B_h = (\{a\}, Q, \mu, t_0, \{e_h\})$ 
for $l \in \{1,\ldots, n\}$ and $h \in \{1,\ldots,m\}$.
Then by~\cite{PighizziniS02} we have an automaton
with index $\lcm(p,q) - 1$
and period $\gcd(p,q)$ accepting $L(\mathcal A_l) \cdot L(\mathcal B_h)$
for $l \in \{1,\ldots, n\}$ and $h \in \{1,\ldots,m\}$.
As the index and period determines the form
of the automaton uniquely, we can suppose the only differing parts
of those automata for each $l \in \{1,\ldots, n\}$
and $h \in \{1,\ldots, m\}$ are the final states.
Hence, we can write $\mathcal C_{l,h} = (\{a\}, R, \eta, r_0, T_{l, h})$
where $R$, $\eta$ and the start state $r_0$ are independent of $l$ and $h$, and
$$
 L((\{a\}, R, \eta, r_0, T_{l, h})) = L(\mathcal A_l) L(\mathcal B_h).
$$
Set $T = \bigcup_{l = 1}^n \bigcup_{h = 1}^m T_{l,h}$, then
\begin{align*} 
 L((\{a\}, R, \eta, r_0, T))
  & =  \bigcup_{l = 1}^n \bigcup_{h = 1}^m L(\mathcal A_l) L(\mathcal B_h) \\
  & = \left( \bigcup_{l = 1}^n L(\mathcal A_l) \right) \cdot \left( \bigcup_{h = 1}^m L(\mathcal B_h) \right) \\
  & = L(\mathcal A) L(\mathcal B).
\end{align*}
This shows our claim. $\qed$
\end{proof}

\begin{lemma}(Combining results from \cite{Hoffmann2021NISextended,DBLP:conf/cai/Hoffmann19})
\label{lem:unary_aut_compatible_with_union}
 Let $\Sigma = \{a_1,\ldots, a_k\}$ and $L \subseteq \Sigma^*$ be a commutative regular language
 with index vector $(i_1, \ldots, i_k)$
 and period vector $(p_1, \ldots, p_k)$.
 Then we can write
 $$
  L = \bigcup_{l=1}^n U_1^{(l)} \shuffle \ldots \shuffle U_k^{(l)}
 $$
 with unary languages $U_j^{(l)} \subseteq \{a_j\}^*$  for $j \in \{1,\ldots, k\}$.
 Furthermore, we can find unary automata $\mathcal A_j = (\{a_j\}, Q_j, \delta_j, s_j, F_j)$
 with indices $i_j$ and periods $p_j$ and
 with $F_j = \{ f_j^{(1)}, \ldots, f_j^{(n)} \}$
 such that for $l \in \{1,\ldots, n\}$
 $$
  U_j^{(l)} = \{ u \in \{a_j\}^* : \delta_j(u) = f_j^{(l)} \}
 $$
 and $|Q_j| = i_j + p_j$. Hence $L(\mathcal A_j) = \{ a_j^{|u|_{a_j}} \mid u \in L \} = \bigcup_{l=1}^n U_j^{l}$.
\end{lemma}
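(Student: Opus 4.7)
The plan is to peel the desired decomposition off the minimal commutative automaton $\mathcal C_L$ from Definition~\ref{def::min_com_aut}, whose state set already has the factored form $S_1 \times \cdots \times S_k$ with $|S_j| = i_j + p_j$, and then read each factor as a unary automaton.

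First, by Theorem~\ref{thm::min_com_aut}, since $L$ is regular and commutative, $\mathcal C_L = (\Sigma, S_1 \times \cdots \times S_k, \delta, s_0, F)$ is finite and recognizes $L$. By the definition of $S_j$, namely $S_j = \{[a_j^m]_{\equiv_L} : m \ge 0\}$, the minimality of the pair $(i_j, p_j)$ giving $[a_j^{i_j}]_{\equiv_L} = [a_j^{i_j + p_j}]_{\equiv_L}$ immediately yields $|S_j| = i_j + p_j$, and the one-letter transitions $\delta_j$ on $S_j$ have index $i_j$ and period $p_j$. Enumerate $F = \{(s_1^{(l)}, \ldots, s_k^{(l)}) : l = 1, \ldots, n\}$ without repetition and set
$$L^{(l)} = \{w \in \Sigma^* : \delta(s_0, w) = (s_1^{(l)}, \ldots, s_k^{(l)})\}.$$
Because every accepted word ends in some final state, $L = \bigcup_{l=1}^n L^{(l)}$.

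Next, put $U_j^{(l)} = \pi_j(L^{(l)}) \subseteq \{a_j\}^*$. Applying Lemma~\ref{lem:union_shuffle_projection_lang} (a decomposition already established in the cited earlier work, whose forward inclusion is trivial from Equation~\eqref{eqn:L_in_shuffle_L} and whose backward inclusion uses Remark~\ref{rem:equal_states_C_L} that the product-of-coordinates transition function is insensitive to the order of letters) yields
$$L^{(l)} = U_1^{(l)} \shuffle \cdots \shuffle U_k^{(l)},$$
so $L = \bigcup_{l=1}^n U_1^{(l)} \shuffle \cdots \shuffle U_k^{(l)}$, as required.

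For the automaton statement, define $\mathcal A_j = (\{a_j\}, S_j, \delta_j, [\varepsilon]_{\equiv_L}, F_j)$ with $F_j = \{s_j^{(1)}, \ldots, s_j^{(n)}\}$ (as a set, so possibly $|F_j| < n$ if some coordinates coincide across the $l$'s), and let $f_j^{(l)} := s_j^{(l)}$. By construction $|Q_j| = |S_j| = i_j + p_j$ and the index and period of $\mathcal A_j$ are $i_j$ and $p_j$. The identity $U_j^{(l)} = \{u \in \{a_j\}^* : \delta_j(u) = f_j^{(l)}\}$ is the second claim of Lemma~\ref{lem:union_shuffle_projection_lang}, which follows because on a purely $a_j$-string the product automaton only moves in the $j$-th coordinate. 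Taking the union over $l$ then gives $L(\mathcal A_j) = \bigcup_{l=1}^n U_j^{(l)} = \pi_j(L)$, and the trivial identity $\pi_j(L) = \{a_j^{|u|_{a_j}} : u \in L\}$ completes the statement.

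The main obstacle is essentially bookkeeping: one has to verify that the coordinate-wise automata $\mathcal A_j$ inherit exactly the right index $i_j$ and period $p_j$ coming from the Nerode congruence on unary words. This is already covered once one invokes the minimal-commutative-automaton construction of~\cite{GomezA08} together with Lemma~\ref{lem::nerode_on_projections}, which ensures that $a_j^m \equiv_L a_j^n$ iff $a_j^m \equiv_{\pi_j(L)} a_j^n$, so the states $S_j$ really are the Nerode classes of $\pi_j(L)$. Everything else is a direct reading of Theorem~\ref{thm:reg_commutative_form} and the previously established decomposition lemmas.
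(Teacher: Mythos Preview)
Your proof is correct and follows essentially the same route as the paper's: both extract the decomposition from the minimal commutative automaton $\mathcal C_L$ of~\cite{GomezA08}, reading the final states coordinate-wise to obtain the unary languages $U_j^{(l)}$ and the automata $\mathcal A_j$ (the paper simply cites this as Corollary~2 and Lemma~5 of~\cite{Hoffmann2021NISextended,DBLP:conf/cai/Hoffmann19}, whereas you spell it out via Lemma~\ref{lem:union_shuffle_projection_lang}). One small caveat: your closing appeal to Lemma~\ref{lem::nerode_on_projections} is misplaced, since that lemma assumes $L = \bigshuffle_{i=1}^k \pi_i(L)$, which need not hold for an arbitrary commutative $L$; this is harmless, however, because you already derived the index and period of $\mathcal A_j$ directly from the definition of $(i_j,p_j)$ earlier in the argument, and the identity $L(\mathcal A_j) = \pi_j(L)$ is already part of Lemma~\ref{lem:union_shuffle_projection_lang}.
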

\begin{proof} 
We refer to~\cite{DBLP:conf/cai/Hoffmann19,Hoffmann2021NISextended,GomezA08}
for the definition of the minimal commutative automaton.
We also use the same notation for the languages $U_j^{(l)}$
as used in~\cite{Hoffmann2021NISextended,DBLP:conf/cai/Hoffmann19}.
The first claim is stated as Corollary~2 in~\cite{DBLP:conf/cai/Hoffmann19,Hoffmann2021NISextended}.
Lemma 5 of~\cite{Hoffmann2021NISextended,DBLP:conf/cai/Hoffmann19}
states that we can derive from the minimal commutative automaton
a unary automaton with $i_j + p_j$ states such that
$U_j^{(l)}$ for $l \in \{1,\ldots, n\}$ is precisely the set of words
that lead this automaton into a single final state.
For each $U_j^{(l)}$ the automaton with $i_j + p_j$ states
is the same automaton. Hence, collecting in $F$ all the final
states corresponding to the $U_j^{(l)}$ gives the claim.
Note that we could have $U_j^{(l)} = U_j^{(l')}$
for distinct $l, l' \in \{1,\ldots, n\}$.
Also note that by definition, and as the minimal commutative automaton
is deterministic, if $U_j^{(l)} \ne U_j^{(l')}$, then 
$U_j^{(l)} \cap U_j^{(l')} = \emptyset$. \qed
\end{proof}
\end{toappendix}

 \begin{theoremrep}
   \label{thm:sc_comm_grp_lang}
     Let $\Sigma = \{a_1,\ldots, a_k\}$. 
     For commutative group languages $U, V \subseteq \Sigma^*$
     with period vectors $(p_1, \ldots, p_k)$
     and $(q_1, \ldots, q_k)$ their shuffle $U\shuffle V$
     has index vector $(i_1, \ldots, i_k)$ with $i_j = \lcm(p_j, q_j) - 1$ 
     for $j \in \{1,\ldots, k\}$
     and period vector 
     $(\gcd(p_1, q_1), \ldots, \gcd(p_k, q_k))$. Hence, by Theorem~\ref{thm:sc_index_period},
     $$
      \stc(U\shuffle V) \le \prod_{j=1}^k ( \gcd(p_j, q_j) + \lcm(p_j, q_j) - 1).
     $$
     Furthermore,
     there exist commutative group languages
     such that a minimal automaton recognizing their shuffle reaches the bound.
    \end{theoremrep}
    \begin{proof}
 Write
 \begin{align*}
     U & = \bigcup_{l=1}^n U_1^{(l)} \shuffle \ldots \shuffle U_k^{(l)} \\
     V & = \bigcup_{h=1}^m V_1^{(h)} \shuffle \ldots \shuffle V_k^{(h)}
 \end{align*}
 with, for $j \in \{1,\ldots, k\}$, unary automata $\mathcal A_j = (\{a_j\}, Q_j, \delta_j, s_j, F_j)$
 and $\mathcal B_j = (\{a_j\}, P_j, \mu_j, t_j, E_j)$, $F_j = \{ f_j^{(l)} \mid l \in \{1,\ldots, n\} \}$
 and $E_j = \{ e_j^{(h)} \mid h \in \{1,\ldots, m\} \}$, with indices zero
 and periods $p_j$ and $q_j$ respectively,
 according to Lemma~\ref{lem:unary_aut_compatible_with_union}.
 Let $\mathcal C_j = (\{a_j\}, R_j, \eta_j, r_j, T_j)$
 with $T_j = \bigcup_{l=1}^n \bigcup_{h=1}^m T_j^{(l,h)}$
 be automata according to Lemma~\ref{lem:pighizziniS02_unary_group_lang_sc}
 such that
 \begin{equation}\label{eqn:concat_parts}
  \{ u \mid \eta_j(r_j, u) \in  T_j^{(l,h)} \} = U_j^{(l)} \cdot V_j^{(h)}
 \end{equation}
 for $l \in \{1,\ldots, n\}$ and $h \in \{1,\ldots, m\}$. 
 Hence, the indices and periods of these automata, as stated in Lemma~\ref{lem:pighizziniS02_unary_group_lang_sc},
 are precisely $\lcm(p_j, q_j) - 1$ and $\gcd(p_j, q_j)$.
 Define
 $$
  \mathcal C = (\Sigma, R_1 \times \ldots \times R_k, \eta, (r_1, \ldots, r_k), T)
 $$
 with
 $$
  T = \bigcup_{l=1}^n \bigcup_{h=1}^m T_1^{(l,h)} \times \ldots \times T_k^{(l,h)}
 $$
 and transition function $\delta((r_1, \ldots, r_j, \ldots, r_k), a_j) = (r_1, \ldots, \delta_j(r_j, a_j), \ldots, r_k)$
 for $(r_1, \ldots, r_j, \ldots, r_k) \in R_1 \times \ldots \times R_k$.
 By construction of $\mathcal C$ and Equation~\eqref{eqn:concat_parts}, we have
 $$
  \delta(u, (r_1, \ldots, r_k)) \in T_1^{(l,h)} \times \ldots \times T_k^{(l,h)}
   \Leftrightarrow u \in ( U_1^{(l)} V_1^{(h)} ) \shuffle \ldots \shuffle ( U_k^{(l)} V_k^{(h)} )
 $$
 for $l \in \{1,\ldots, n\}$ and $h \in \{1,\ldots, m\}$. Hence
 \begin{align*}
  L(\mathcal C) 
   & = \bigcup_{l=1}^n \bigcup_{h=1}^m ( U_1^{(l)} V_1^{(h)} ) \shuffle \ldots \shuffle ( U_k^{(l)} V_k^{(h)} ) \\
   & = \bigcup_{l=1}^n \bigcup_{h=1}^m  U_1^{(l)} \shuffle \ldots \shuffle U_k^{(l)} \shuffle V_1^{(l)} \shuffle \ldots \shuffle V_k^{(l)}\\
   & = \left( \bigcup_{l=1}^n  U_1^{(l)} \shuffle \ldots \shuffle U_k^{(l)} \right) \shuffle \left( \bigcup_{h=1}^m  V_1^{(l)} \shuffle \ldots \shuffle V_k^{(l)} \right) \\
   & = U \shuffle V
 \end{align*}
 as the shuffle operation is commutative and distributive over union and
 for unary languages shuffle and concatenation are the same operations.
 Now, we show that the bound is sharp.
 Let $p, q$ be two distinct prime numbers. Then, every number greater than $p\cdot q - (p + q) + 1$
 could be written in the form $ap + bq$ with $a,b \ge 0$, and this is the minimal number
 with this property, i.e., $pq - (p+q)$ could not be written in that way; see~\cite{PighizziniS02}.
 Set $V_j = a_j^{p-1}(a_j^p)^*$, $W_j = a_j^{q-1}(a_j^q)^*$
 and $U_j = \{ a_j^{p + q - 2 + ap + bq} \mid a,b \ge 0 \} = V_j W_j$.
 Then
 $$
  \stc(V_1 \shuffle \ldots \shuffle V_k) = p^k.
 $$
 For, the permutation automaton $\mathcal C = (\Sigma, [p]^k, \delta, (0,\ldots,0), \{ (p-2, \ldots, p-2) \})$
 with
 $$
  \delta( (i_1, \ldots, i_{j-1}, i_j, i_{j+1},\ldots, i_k), a_j )
   = (i_1, \ldots, i_{j-1}, (i_j + 1) \bmod p, i_{j+1}, \ldots, i_k)
 $$
 accepts $V_1 \shuffle \ldots \shuffle V_k$, and it is easy to see that if a language is accepted 
 by a permutation automaton with a single final state, then this automaton is minimal.
 Similarly, $\stc(W_1 \shuffle \ldots \shuffle W_k) = q^k$.
 
 Set $n_j = pq - 1$ for $j \in \{1,\ldots, k\}$. Then the language
 \begin{multline*}
      L = U_1 \shuffle \ldots \shuffle U_k = (V_1 W_1) \shuffle \ldots \shuffle (W_1 W_k) \\ 
       = (V_1 \shuffle \ldots \shuffle V_k) \shuffle (W_1 \shuffle \ldots \shuffle W_K).
 \end{multline*}
 fulfills the prerequisites of Lemma~\ref{lem:lower_bound_sc} with $(n_1, \ldots, n_k)$,
 as for each $j \in \{1,\ldots k\}$ we have $a_j^{pq - 2} \notin U_j$ and
 $a_j^{pq-1}a_j^* \subseteq U_j$.
 Hence,
 $$
  \stc(L) = (pq)^{|\Sigma|} = (\gcd(p,q) + \lcm(p,q) - 1)^{|\Sigma|}
 $$
 and the bound given by the statement is attained. \qed 
\end{proof}

As for any two numbers $n,m > 0$
we always have $\gcd(n,m) + \lcm(n,m) - 1 \le nm$,
we can deduce the next bound in terms of the size of recognizing automata.
The result improves the general bound $(2nm)^{|\Sigma|}$ from~\cite{Hoffmann2021NISextended,DBLP:conf/cai/Hoffmann19}.
\todo{die sprache mit größer $nm$?}

\begin{theorem}
\label{thm:group:shuffle_upper_bound}
 Let $U, V \subseteq \Sigma^*$ be commutative group languages
 recognized by automata with $n$ and $m$ states.
 Then, $U\shuffle V$ is recognizable by an automaton with at most $(nm)^{|\Sigma|}$
 states.
\end{theorem}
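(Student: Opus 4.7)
The plan is to deduce this from the more refined bound in Theorem~\ref{thm:sc_comm_grp_lang}, which is stated in terms of period vectors rather than automaton sizes. The bridge between these two formulations is provided by Theorem~\ref{thm:char_grp_aperiodic_by_index_period_vectors} together with Theorem~\ref{thm:sc_index_period}.

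First I would invoke Theorem~\ref{thm:char_grp_aperiodic_by_index_period_vectors} to observe that, since $U$ and $V$ are commutative group languages, their index vectors are both $(0,\ldots,0)$. Writing the period vectors as $(p_1,\ldots,p_k)$ for $U$ and $(q_1,\ldots,q_k)$ for $V$, Theorem~\ref{thm:sc_index_period} then gives $p_j \le \stc(U) \le n$ and $q_j \le \stc(V) \le m$ for every $j \in \{1,\ldots,k\}$. Next, by Theorem~\ref{thm:sc_comm_grp_lang}, we have the bound
\[
 \stc(U\shuffle V) \le \prod_{j=1}^k \bigl(\gcd(p_j,q_j) + \lcm(p_j,q_j) - 1\bigr).
\]

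The remaining ingredient is the elementary inequality $\gcd(a,b) + \lcm(a,b) - 1 \le ab$ for all positive integers $a,b$. I would establish this by setting $d = \gcd(a,b)$ and $\ell = \lcm(a,b) = ab/d$, so the inequality reads $d + \ell - 1 \le d\ell$, i.e., $(d-1)(\ell-1) \ge 0$, which is obvious. Applying this with $a = p_j$ and $b = q_j$ yields $\gcd(p_j,q_j) + \lcm(p_j,q_j) - 1 \le p_j q_j \le nm$ for every $j$, and multiplying over all $j \in \{1,\ldots,k\}$ gives $\stc(U\shuffle V) \le (nm)^{|\Sigma|}$.

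There is no real obstacle here; the proof is essentially a one-line consequence of the previous theorem together with the number-theoretic inequality above. The only thing to be careful about is ensuring the hypotheses of Theorem~\ref{thm:sc_index_period} and Theorem~\ref{thm:sc_comm_grp_lang} apply, which they do since both $U$ and $V$ are commutative group languages by assumption.
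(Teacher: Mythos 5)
Your derivation is correct and is exactly how the paper obtains this bound: it deduces the theorem from Theorem~\ref{thm:sc_comm_grp_lang} together with the observation that $\gcd(p_j,q_j)+\lcm(p_j,q_j)-1 \le p_j q_j$ and that, since group languages have zero index vector, each period entry is at most the state complexity. Nothing is missing; the argument matches the paper's (implicit) proof step for step.
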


We do not know if the last bound is sharp. The best lower bound we can give is the next one, which essentially follows
by the lower bound for concatenation in case of unary languages, see~\cite[Theorem 5.4]{YuZhuangSalomaa1994}.

\begin{propositionrep}
\label{prop:group:shuffle_lower_bound}
 Let $n, m > 0$ be coprime numbers. Then, there exist commutative group languages of states complexities
 $n$ and $m$ such that their shuffle has state complexity $nm$.
\end{propositionrep}
\begin{proof} 
 Let $a \in \Sigma$ and $n,m > 0$ be coprime numbers.
 Set $U = \{ w \in \Sigma^* \mid |w|_a \equiv n - 1 \pmod{n} \} = a^{n-1}(a^n)^* \shuffle (\Sigma\setminus \{a\})^*$
 and $V = \{ w \in \Sigma^* \mid |w|_a \equiv m - 1 \pmod{m} \} = a^{m-1}(a^m) \shuffle (\Sigma\setminus \{a\})^*$.
 Then, by a number-theoretical 
 result from~\cite[Lemma 5.1]{YuZhuangSalomaa1994},
 $U \shuffle V = F \shuffle (\Sigma\setminus \{a\})^* \cup a^{nm - 1}a^* \shuffle  (\Sigma\setminus \{a\})^*$
 for some finite $F \subseteq \Sigma^*$
 with $a^{nm - 2} \notin F$ 
 and this language has state complexity $nm$.~\qed
\end{proof}

\subsection{The Upward and Downward Closure and Interior Operations}


First, we will show that every word is contained in the downward closure
of a commutative group language.

\begin{propositionrep} 
\label{prop:group:downward_upper_bound}
 Let  $L \subseteq \Sigma^*$ be a commutative group language.
 Then the downward closure $\mathop{\downarrow\! L}$ equals $\Sigma^*$.
\end{propositionrep}
\begin{proof}
 Let $\mathcal A = (\Sigma, Q, \delta, q_0, F)$
 be a permutation automaton with $L(\mathcal A) = L$.
 If $L = \{\varepsilon\}$, then the statement is clear.
 So, suppose $L \ne \{\varepsilon\}$. 
 Note that $L \ne \emptyset$, as $\emptyset$ is never a group language.
 Let, without loss of generality,
 $\{ a \mid \exists w \in L : |w|_a > 0 \}^* = \{ a_1,\ldots,a_l\}$, $0 < l \le k$.
 Choose, for any $j \in \{1,\ldots,l\}$,
 a word $w_j \in L$ with $|w_j|_{a_j} > 0$.
 Set $w = w_1 \cdots w_l$.
 Then, some power $w^m$ acts as the identity on $Q$.
 Hence, $w_1 w^{r \cdot m} \in L$ for any $r \ge 0$. 
 Let $u \in \{a_1,\ldots,a_l\}^*$ be arbitrary, then,
 as 
 \[
  |w_1 w^{r\cdot m}|_{a_j} < |w_1 w^{(r+1) \cdot m}|_{a_j}
 \]
 for any $r \ge 0$ and $j \in \{1,\ldots,l\}$, there exists
 $r \ge 0$
 such that, for any $j \in \{1,\ldots,l\}$,
 \[
  |u|_{a_j} < |w_1 w^{r\cdot m}|_{a_j}.
 \]
 But then, we can reorder the letters in $w_1 w^{r\cdot m}$
 such that $u$ is a scattered subsequence of this word, i.e., we can delete letters
 from this word such that $u$ results.
 More formally, we have $(u \shuffle \Sigma^*) \cap \perm(w_1 w^{r\cdot m}) \ne \emptyset$.
 As $L$ is commutative, $\perm(w_1 w^{r\cdot m}) \subseteq L$
 and we find $u \in \downarrow\! L$.~\qed 
\end{proof}

\todo{beweisen, mit satz aus anhang?}
This is not true for general commutative languages,
see Proposition~\ref{thm:aperiodic:interior_lower_bound}.

\begin{proposition}
\label{prop:group:upward_closure_lower_bound}
 Let $n > 0$. There exists a commutative group language $L \subseteq \Sigma^*$
 with period vector $(n,1,\ldots,1)$
 such that $\stc(L) = n$
 and its upward closure has state complexity $n$
 with index vector $(n-1,1,\ldots,1)$
 and period vector $(1,\ldots,1)$.
\end{proposition}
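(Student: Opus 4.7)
The plan is to exhibit the witness language $L = a_1^{n-1}(a_1^n)^* \shuffle \{a_2,\ldots,a_k\}^* = \{ w \in \Sigma^* : |w|_{a_1} \equiv n-1 \pmod n \}$, and then verify the claimed properties one by one. The construction is natural because the upward closure of a modular constraint $|w|_{a_1} \equiv n-1 \pmod n$ should collapse to the threshold constraint $|w|_{a_1} \geq n-1$, since $n-1$ is the smallest nonnegative witness of the congruence.

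First, I would show that $L$ is a commutative group language of state complexity $n$ and with period vector $(n,1,\ldots,1)$. For this, consider the permutation automaton on state set $\{0,1,\ldots,n-1\}$ where $a_1$ induces the cyclic shift $i \mapsto (i+1) \bmod n$ and every other letter acts as the identity, with $n-1$ as the only accepting state. This automaton recognizes $L$ and is a permutation automaton, hence $L$ is a commutative group language. The $n$ representatives $\varepsilon,a_1,\ldots,a_1^{n-1}$ are pairwise $\equiv_L$-inequivalent (distinguished by suitable powers of $a_1$), so $\stc(L)=n$. Moreover, $\pi_1(L)=a_1^{n-1}(a_1^n)^*$ has period $n$, while for $j\neq 1$ the word $a_1^{n-1}a_j^m$ lies in $L$ for every $m\geq 0$, hence $\pi_j(L)=a_j^*$, which has period~$1$, matching the claimed period vector.

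Next, I would determine $\mathop{\uparrow\!} L$. The key claim is
\[
\mathop{\uparrow\!} L = \{ w \in \Sigma^* : |w|_{a_1} \geq n-1 \}.
\]
The inclusion ``$\supseteq$'' is immediate since $a_1^{n-1}\in L$ and $a_1^{n-1}\preccurlyeq w$ whenever $|w|_{a_1}\geq n-1$. For ``$\subseteq$'', if $v\in L$ with $v\preccurlyeq w$, then $|v|_{a_1}\equiv n-1\pmod n$ forces $|v|_{a_1}\geq n-1$, and $v\preccurlyeq w$ gives $|w|_{a_1}\geq|v|_{a_1}\geq n-1$. This is the one step that uses the modular structure and I would regard it as the main (but still short) argument of the proof.

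Finally, I would read off the state complexity, index vector, and period vector of $\mathop{\uparrow\!} L$. Since membership depends only on $\min(|w|_{a_1},n-1)$, the $n$ words $\varepsilon,a_1,\ldots,a_1^{n-1}$ are pairwise inequivalent under $\equiv_{\mathop{\uparrow\!} L}$ (distinguished by $a_1^{n-1-i}$), and no other class is needed, so $\stc(\mathop{\uparrow\!} L)=n$. The projection $\pi_1(\mathop{\uparrow\!} L)=a_1^{n-1}a_1^*$ has index $n-1$ and period $1$, while for each $j\neq 1$ the projection $\pi_j(\mathop{\uparrow\!} L)$ is trivial with period $1$, so by Theorem~\ref{thm:char_grp_aperiodic_by_index_period_vectors} the period vector of $\mathop{\uparrow\!} L$ is $(1,\ldots,1)$ and its index vector has first coordinate $n-1$, as stated.
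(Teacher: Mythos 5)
Your proposal is correct and follows essentially the same route as the paper: the paper's (one-line) proof uses exactly the same witness $L = a^{n-1}(a^n)^* \shuffle (\Sigma\setminus\{a\})^*$ and simply asserts that $\mathop{\uparrow\!} L = a^{n-1}a^* \shuffle (\Sigma\setminus\{a\})^*$ with $\stc(\mathop{\uparrow\!} L) = n$, so your write-up merely supplies the verifications (group automaton, Nerode classes, computation of the closure, index/period data) that the paper leaves implicit. No gaps.
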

\begin{proof} 
 Let $a \in \Sigma$ and $n > 0$.
 Set $L = \{ w \in \Sigma^* \mid |w|_a \equiv n - 1 \pmod{n} \} = a^{n-1}(a^n)^* \shuffle (\Sigma\setminus \{a\})^*$.
 Then, $\upwardclosure{L} = a^{n-1}a^* \shuffle (\Sigma\setminus \{a\})^*$
 and $\stc(\upwardclosure{L}) = n$.\qed
\end{proof}

\subsection{Union and Intersection}

\begin{theorem}
\label{thm:group:union_intersection}
 For any alphabet $\Sigma$
 and commutative group language of state complexities
 $n$ and $m$, the intersection and union
 is recognizable by an automaton with $nm$
 states. Furthermore, there exists commutative group
 languages with state complexities such that
 every automaton for their union (intersection)
 needs $nm$ states.
\end{theorem}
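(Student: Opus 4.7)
The plan is to establish the upper bound via the standard product construction, and the lower bound by exhibiting concrete examples over a two-letter alphabet that attain $nm$ states for both union and intersection.

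For the upper bound, let $L_1, L_2 \subseteq \Sigma^*$ be commutative group languages with permutation automata $\mathcal{A}_i = (\Sigma, Q_i, \delta_i, q_{0,i}, F_i)$, $|Q_1| = n$, $|Q_2| = m$. I would form the product automaton $\mathcal{A} = (\Sigma, Q_1 \times Q_2, \delta, (q_{0,1}, q_{0,2}), F)$ with $\delta((p,q), a) = (\delta_1(p,a), \delta_2(q,a))$ and $F = F_1 \times F_2$ for intersection, $F = (F_1 \times Q_2) \cup (Q_1 \times F_2)$ for union. Since a product of two permutations is a permutation on $Q_1 \times Q_2$, the product is a permutation automaton; commutativity is inherited component-wise. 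This yields recognizers with $nm$ states.

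For the lower bound, I would work over $\Sigma \supseteq \{a,b\}$ with $a \ne b$ and take the commutative group languages
\[
L_1 = \{ w \in \Sigma^* : |w|_a \equiv n-1 \pmod n \}, \qquad L_2 = \{ w \in \Sigma^* : |w|_b \equiv m-1 \pmod m \},
\]
whose minimal permutation automata track $|w|_a \bmod n$ (resp.\ $|w|_b \bmod m$) and act as the identity on the remaining letters; this shows $\stc(L_1) = n$ and $\stc(L_2) = m$. Then I would argue that the $nm$ words $\{a^i b^j : 0 \le i < n,\ 0 \le j < m\}$ represent pairwise distinct Nerode classes for both $L_1 \cap L_2$ and $L_1 \cup L_2$. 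For the intersection, given $(i_1, j_1) \ne (i_2, j_2)$, the word $x = a^{n-1-i_1} b^{m-1-j_1}$ (with exponents read mod $n$ and $m$ respectively) drives $a^{i_1}b^{j_1}$ into $L_1 \cap L_2$ while sending $a^{i_2}b^{j_2}$ outside of it, since at least one coordinate mismatches.

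The main obstacle is the union case, where acceptance is weaker and two states of the product automaton can be accepting for different reasons. The plan here is a case split on whether $i_1 \ne i_2$ or $j_1 \ne j_2$: in the first case I would choose a suffix $x = a^{\alpha} b^{\beta}$ with $\alpha$ chosen so that $|a^{i_1}b^{j_1}x|_a \not\equiv n-1 \pmod n$ while $|a^{i_2}b^{j_2}x|_a \equiv n-1 \pmod n$ (or vice versa), and simultaneously $\beta$ chosen so that \emph{both} resulting $b$-counts miss $m-1 \pmod m$, thereby forcing exactly one of $a^{i_1}b^{j_1}x,\, a^{i_2}b^{j_2}x$ into $L_1 \cup L_2$. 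The case $i_1 = i_2$, $j_1 \ne j_2$ is symmetric. Since this yields $nm$ pairwise distinguishable classes, the union (and intersection) state complexities each reach $nm$, matching the upper bound. \qed
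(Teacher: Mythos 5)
Your upper bound is fine and matches the paper in spirit (the paper simply cites the general product-construction bound for regular languages), but your lower bound misses the actual point of the theorem. The statement is claimed \emph{for any alphabet} $\Sigma$, and the paper's proof stresses exactly this: it chooses coprime $n,m$ and the witnesses $U=(a^n)^*\shuffle(\Sigma\setminus\{a\})^*$ and $V=(a^m)^*\shuffle(\Sigma\setminus\{a\})^*$, which count occurrences of the \emph{same} letter $a$, so that the construction still works when $\Sigma=\{a\}$ (by coprimality, $U\cap V$ forces counting modulo $nm$, and likewise the union needs $nm$ states). This is what distinguishes the commutative group case from the aperiodic case in the paper's tables, where the unary bound for union/intersection drops to $\max\{n,m\}$. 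Your witnesses $L_1$ (counting $a$ mod $n$) and $L_2$ (counting $b$ mod $m$) are essentially the classical Yu--Zhuang--Salomaa/Brzozowski--Liu examples and require $|\Sigma|\ge 2$, so they do not establish the theorem for unary alphabets; as written, your proof only covers the part that was already known.

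There is also a smaller technical slip in your union argument. In the case $i_1\ne i_2$ you want $\beta$ with both $j_1+\beta\not\equiv m-1\pmod m$ and $j_2+\beta\not\equiv m-1\pmod m$; if $j_1\ne j_2$ and $m=2$ these two forbidden residues exhaust $\mathbb{Z}/2\mathbb{Z}$ and no such $\beta$ exists. The fix is to split differently: if \emph{both} coordinates differ, the suffix $a^{n-1-i_1}b^{m-1-j_1}$ already puts $a^{i_1}b^{j_1}$ in the union and $a^{i_2}b^{j_2}$ outside it (both counts miss), and only when exactly one coordinate differs do you need the ``hit one, miss the other, and miss the shared coordinate'' suffix, which then exists. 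Even with this repair, though, you still need a separate argument (such as the paper's single-letter, coprime-period construction) to cover $|\Sigma|=1$.
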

\begin{proof}
 The upper bound holds for regular languages in general~\cite{YuZhuangSalomaa1994}. 
 Let $a \in \Sigma$ and $n,m > 0$ two coprime numbers. 
 The languages for the lower bound are similar
 to the ones given in~\cite{YuZhuangSalomaa1994}, but
 also work for $\Sigma = \{a\}$ (in~\cite{YuZhuangSalomaa1994} $|\Sigma|\ge 2$ is assumed).
 Set $U = \{ w \in \Sigma^* \mid |w|_a \equiv 0 \pmod{n} \} = (a^n)^* \shuffle (\Sigma\setminus \{a\})^*$
 and $V = \{ w \in \Sigma^* \mid |w|_a \equiv 0 \pmod{m} \}$.
 Then, $\stc(U \cap V) = \stc(U \cup V) = nm$.\qed  
\end{proof}

\section{The Case of Commutative Aperiodic Languages}

For shuffle (Theorem~\ref{thm::sc_aperiodic}), union and intersection (see~\cite[Theorem 4.3]{YuZhuangSalomaa1994}), we already have upper bounds, see Table~\ref{tab:sc_results_here}.
Note that the lower bound construction for Boolean operations
given in~\cite[Theorem 1 \& 8]{BrzozowskiL12} over an at least
binary alphabet uses
the commutative and aperiodic languages $\{ w \in \Sigma^* \mid |w|_a = n - 2 \}$
and $\{ w \in \Sigma^* \mid |w|_b = m - 2 \}$ for $n,m \ge 2$. For the upward closure, Proposition~\ref{prop:aperiodic:upward_lower_bound}
gives a lower bound. So, here, we handle the missing
cases of the downward closure and the shuffle operation.

But first, we give a similar normal form theorem for aperiodic
commutative languages as for group languages.
The proof is essentially the same as in the group case.

\begin{theorem}
 Let $\Sigma = \{a_1, \ldots, a_k\}$
 and $L \subseteq \Sigma^*$.
 Then, the following conditions are equivalent:
 \begin{enumerate}
 \item $L$ is a commutative and aperiodic language;
 \item $L$ is a finite union of languages of the form 
  $
   U_1 \shuffle \ldots \shuffle U_k
  $
  with $U_j \subseteq \{ a_j \}^*$ aperiodic
  and recognizable by an automaton with a single final state;
 \item $L$ is a finite union of languages of the form 
  $
   U_1 \shuffle \ldots \shuffle U_k
  $
  with $U_j \subseteq \{ a_j \}^*$ aperiodic.
 \end{enumerate}
\end{theorem}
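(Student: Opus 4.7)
The plan is to mirror the proof structure of the analogous characterization of commutative group languages established earlier in this section, replacing group-theoretic ingredients with their aperiodicity counterparts. The key bridge is Theorem~\ref{thm:char_grp_aperiodic_by_index_period_vectors}, which states that a commutative regular language is aperiodic if and only if its period vector equals $(1,\ldots,1)$.

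The implication $(2) \Rightarrow (3)$ is immediate, since a decomposition with single-final-state components is a special case of an arbitrary one. For $(3) \Rightarrow (1)$, commutativity is clear: each $U_j \subseteq \{a_j\}^*$ lives over a one-letter alphabet, so $U_1 \shuffle \ldots \shuffle U_k$ consists exactly of those $w \in \Sigma^*$ for which $\pi_j(w) \in U_j$ for every $j$, which is a commutativity-preserving condition, and commutativity is preserved under finite union. For aperiodicity, I would iterate the corollary of Theorem~\ref{thm::sc_aperiodic} (Perrot's theorem) that the shuffle of two commutative aperiodic languages is aperiodic: each $U_j$ is a unary aperiodic language, hence trivially commutative aperiodic, and inductively $U_1 \shuffle \ldots \shuffle U_k$ is commutative aperiodic. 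Finite unions then preserve both properties, since the aperiodic languages form a Boolean algebra.

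For $(1) \Rightarrow (2)$, I would apply Theorem~\ref{thm:reg_commutative_form} to write $L = \bigcup_{i=1}^n U_1^{(i)} \shuffle \ldots \shuffle U_k^{(i)}$ with non-empty unary regular $U_j^{(i)} \subseteq \{a_j\}^*$, each recognizable by a unary automaton with a single final state. The key step is to observe that each $U_j^{(i)}$ is aperiodic. In the construction underlying Theorem~\ref{thm:reg_commutative_form} (from~\cite{Hoffmann2021NISextended,DBLP:conf/cai/Hoffmann19}), the $U_j^{(i)}$ are obtained by projecting the minimal commutative automaton of $L$ to the $j$-th coordinate axis, and the resulting unary automaton has period equal to the $j$-th coordinate of the period vector of $L$. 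Since $L$ is aperiodic, Theorem~\ref{thm:char_grp_aperiodic_by_index_period_vectors} gives $p_j = 1$, so the projected unary automaton has period $1$; any such unary language is either a singleton $\{a_j^m\}$ or of the form $a_j^m a_j^*$, and both are aperiodic.

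\textbf{Main obstacle.} The only real subtlety is in $(1) \Rightarrow (2)$: Theorem~\ref{thm:reg_commutative_form} asserts a decomposition but does not by itself guarantee that the individual components $U_j^{(i)}$ are aperiodic. Justifying this requires a careful look at the construction behind that normal form, or equivalently at the minimal commutative automaton, to read off that the period of the unary component $U_j^{(i)}$ in coordinate $j$ is exactly the $j$-th entry of the period vector of $L$. Once this correspondence is made explicit, aperiodicity of $L$ via Theorem~\ref{thm:char_grp_aperiodic_by_index_period_vectors} immediately yields aperiodicity of every $U_j^{(i)}$, and the proof closes in parallel with the group case.
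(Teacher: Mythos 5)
Your proof is correct and follows essentially the same route as the paper, which simply adapts its group-case argument: decompose $L$ into per-final-state shuffles of unary components (Theorem~\ref{thm:reg_commutative_form} / the minimal commutative automaton), check the unary components inherit aperiodicity, and use closure of aperiodic and commutative languages under shuffle and finite union for the converse. Your explicit justification via Theorem~\ref{thm:char_grp_aperiodic_by_index_period_vectors} — that the unary components have period $p_j=1$, hence are of the form $\{a_j^m\}$ or $a_j^m a_j^*$ — is exactly the detail the paper leaves to the reader (and is backed by its Lemma~\ref{lem:unary_aut_compatible_with_union}), so no gap remains.
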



Next, we state a lower bound for the downward closure. By Equation~\eqref{eqn:interior}, using the complemented language,
this implies the same lower bound for the upward interior.

\begin{proposition}
\label{thm:aperiodic:interior_lower_bound} 
 Let $a \in \Sigma$ and $n > 0$. Set $L = \{a^n\}$.
 Then,
 $\mathop{\downarrow\!} L = \{ \varepsilon, a, \ldots, a^n \}$
 and so $\stc(\mathop{\downarrow\!} L) = \stc(L) = n + 2$.
\end{proposition}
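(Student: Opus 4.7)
The plan is straightforward and splits into three parts: identify $\mathop{\downarrow\!} L$ explicitly, compute its state complexity, and verify that $\stc(L) = n + 2$.

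First, I would identify the downward closure. Since $L = \{a^n\}$ is a singleton, the definition of the subsequence order gives $\mathop{\downarrow\!} L = \{ u \in \Sigma^* : u \preccurlyeq a^n \}$. Deleting symbols from $a^n$ can only produce powers of $a$ of length at most $n$, and conversely every such $a^k$ with $0 \le k \le n$ is obtained by deleting $n-k$ copies of $a$. Hence $\mathop{\downarrow\!} L = \{ \varepsilon, a, a^2, \ldots, a^n \}$.

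Second, I would show that $\stc(\mathop{\downarrow\!} L) = n+2$ by exhibiting a DFA that meets a matching lower bound. The chain automaton with accepting states $q_0, q_1, \ldots, q_n$ reached by $\varepsilon, a, \ldots, a^n$ and transitions $q_i \xrightarrow{a} q_{i+1}$ for $i<n$, together with a single non-accepting sink absorbing the $a$-transition out of $q_n$ and every transition labelled by some $b \in \Sigma \setminus \{a\}$, has exactly $n+2$ states and recognizes $\mathop{\downarrow\!} L$. For the lower bound, the $n+2$ words $\varepsilon, a, a^2, \ldots, a^n, a^{n+1}$ represent pairwise distinct Nerode classes: for $0 \le i < j \le n$, the suffix $a^{n-i}$ separates $a^i$ and $a^j$ since $a^i \cdot a^{n-i} = a^n \in \mathop{\downarrow\!} L$ while $a^j \cdot a^{n-i} = a^{n+(j-i)}$ has length greater than $n$ and hence lies outside $\mathop{\downarrow\!} L$; and $a^{n+1}$ itself is not in $\mathop{\downarrow\!} L$, distinguishing it from every $a^i$ with $i \le n$ via the empty suffix.

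Finally, the same counting yields $\stc(L) = n+2$: the minimal DFA for the singleton $L = \{a^n\}$ has states $q_0, \ldots, q_n, q_{\mathrm{dead}}$ with only $q_n$ accepting, and the same representatives $\varepsilon, a, \ldots, a^{n+1}$ are pairwise distinguishable with respect to $L$ (using the suffix $a^{n-j}$ to separate $a^i$ from $a^j$ for $i < j \le n$, and noting that no extension brings $a^{n+1}$ back into the language). There is no real obstacle; the only small care needed is that the sink state is used whether $|\Sigma|=1$ (because of overshooting the length $n$) or $|\Sigma| \ge 2$ (additionally because of reading some $b \ne a$), but in both cases the count is exactly $n+2$.
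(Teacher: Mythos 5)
Your proof is correct: the identification $\mathop{\downarrow\!}\{a^n\}=\{\varepsilon,a,\ldots,a^n\}$, the explicit $(n+2)$-state chain-plus-sink DFAs, and the Nerode-class representatives $\varepsilon,a,\ldots,a^{n+1}$ with the separating suffixes you give are exactly the routine verification the paper omits (it states this proposition without proof, treating it as easy to see). Your remark that the sink is needed even when $|\Sigma|=1$ is the only point requiring care, and you handle it correctly.
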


For concatenation it is known that the state complexity 
for unary languages is $nm$~\cite[Theorem 5.4 \& Theorem 5.5]{YuZhuangSalomaa1994}. In fact, the witness languages for the lower bound are group languages (see also Prop~\ref{prop:group:shuffle_lower_bound}).
For unary aperiodic languages, concatenation has state complexity $n + m - 1$~\cite[Theorem 8]{BrzozowskiL12}. By using these unary witness languages and introducing self-loops
for additional letters, and as in the unary case (and for the mentioned extension to more letters) concatenation
and shuffle coincide, this immediately gives lower bounds for the shuffle operation on commutative languages as well.

However, in the next result, we show that we can do better for aperiodic (even for finite) commutative languages.

\begin{proposition}
\label{prop:aperiodic:shuffle_lower_bound}
 Let $\Sigma$ be an at least binary alphabet. 
 Then for each even $n, m > 0$
 there exist commutative and finite languages $U, V \subseteq \Sigma^*$ with $\stc(U) = n$, $\stc(V) = m$
 such that $\stc(U \shuffle V) \ge \frac{nm}{4} + 1$.
\end{proposition}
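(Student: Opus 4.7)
My plan is to exhibit a concrete pair of witness languages and verify via Nerode classes that their shuffle has state complexity comfortably above $nm/4 + 1$. Fixing two distinct letters $a, b \in \Sigma$ (possible since $|\Sigma| \ge 2$), I will take
\[
 U = \{a^i : 0 \le i \le n-2\} \qquad\text{and}\qquad V = \{b^j : 0 \le j \le m-2\}
\]
regarded as subsets of $\Sigma^*$. Both are trivially finite, and each is commutative since every word in either language is a power of a single letter.

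To establish $\stc(U) = n$, I will check that the words $\varepsilon, a, a^2, \ldots, a^{n-2}$ represent $n-1$ distinct $\equiv_U$-classes (any two being distinguished by a suffix of the form $a^{n-2-i}$), while any word containing a letter other than $a$, or containing more than $n-2$ copies of $a$, collapses into the single remaining dead class. The argument for $\stc(V) = m$ is symmetric.

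Next, I will observe that $U \shuffle V = \{ w \in \{a,b\}^* : |w|_a \le n-2 \text{ and } |w|_b \le m-2 \}$ and count its Nerode classes. For each pair $(i,j)$ with $0 \le i \le n-2$ and $0 \le j \le m-2$, the word $a^i b^j$ lies in $U \shuffle V$. I will argue these $(n-1)(m-1)$ words are pairwise non-equivalent: if $(i,j) \ne (i',j')$, say with $i < i'$, then the suffix $a^{n-2-i} b^{m-2-j}$ keeps $a^i b^j$ inside $U \shuffle V$ while driving $a^{i'} b^{j'}$ out, since its $a$-count then exceeds $n-2$; the case $j \ne j'$ is symmetric. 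A single additional dead class (for words containing a letter outside $\{a,b\}$ or already too many $a$'s or $b$'s) raises the total to $\stc(U \shuffle V) = (n-1)(m-1)+1$.

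The proof will conclude with the elementary inequality $(n-1)(m-1)+1 \ge nm/4 + 1$, which rearranges to $3nm - 4(n+m) + 4 \ge 0$. Viewing the left-hand side as linear in $n$ with slope $3m - 4 \ge 2$ shows its minimum over $n \ge 2$ is attained at $n = 2$, giving $2m - 4 \ge 0$, which holds for $m \ge 2$. Since even $n, m > 0$ force $n, m \ge 2$, the claim follows. I do not foresee any real obstacle here: the plan reduces to a direct Nerode-class count combined with a routine quadratic inequality.
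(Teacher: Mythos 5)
Your proof is correct, but it takes a genuinely different route from the paper's. The paper (for even $n=2N+2$, $m=2M+2$) uses the witnesses $U=\{a^N,b^N\}$ and $V=\{a^M,b^M\}$, so each input language uses both letters; the shuffle then accepts exactly the Parikh vectors $(N+M,0),(N,M),(M,N),(0,N+M)$, and showing that the words $a^ib^j$ with $0\le i\le N$, $0\le j\le M$ are pairwise inequivalent requires a case analysis (appending a suffix aimed at the accepting vector $(N,M)$ can accidentally hit $(M,N)$, and the components equal to zero need separate treatment), yielding exactly $\frac{nm}{4}+1$ states. You instead place the two witnesses on disjoint letters, $U\subseteq a^*$ and $V\subseteq b^*$, each a downward-closed interval of exponents; then $U\shuffle V$ is just the set of words over $\{a,b\}$ with $a$-count at most $n-2$ and $b$-count at most $m-2$, the Nerode count factorizes with no interference into $(n-1)(m-1)$ live classes plus one sink, and the distinguishing suffixes are immediate. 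Your verification of $\stc(U)=n$, $\stc(V)=m$ (the $n-1$, resp.\ $m-1$, live classes plus a sink, which exists because $|\Sigma|\ge 2$) and of the final inequality $(n-1)(m-1)+1\ge \frac{nm}{4}+1$ for $n,m\ge 2$ is sound. What each approach buys: yours is shorter, avoids the case analysis entirely, works for all $n,m\ge 2$ (evenness is only needed for the paper's parametrization), and actually proves the stronger bound $(n-1)(m-1)+1$, i.e.\ asymptotically $nm$ rather than $nm/4$; the paper's witnesses show the phenomenon with languages that each occupy the full binary alphabet, so the two constructions illustrate different sources of the quadratic blow-up, but for the proposition as stated your argument fully suffices.
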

\begin{proof}
 We give the construction for a binary alphabet, as for larger alphabets
 the lower bound is implied by adding self-loops to the automaton for each additional letter.
 Let $\Sigma = \{a,b\}$ and $N, M \ge 0$
 with  $n = |\Sigma|N + 2$
 and $m = |\Sigma|M + 2$.
 Set $U = \{a^N, b^N\}$ and  $V = \{a^M, b^M\}$.
 Then $\stc(U) = n$ and $\stc(V) = m$.
 Set \todo{bild in errata/comments-datei?}
 \begin{multline*}
      L = U\shuffle V  = \{ u \in \Sigma^* \mid (|u|_a = N + M, |u|_b = 0) \lor (|u|_a = N, |u|_b = M) \\ 
 \lor (|u|_a = M, |u|_b = N)
 \lor (|u|_a = 0, |u|_b = N + M) \}.
 \end{multline*} 
 We show that the words $a^{i} b^{j}$
 for $i \in \{0,1,\ldots,N\}$
 and $j \in \{0,1,\ldots,M\}$
 are pairwise inequivalent for the Nerode right-congruence.
 Let $(n_1, n_2), (m_1, m_2) \in \{0,1,\ldots,N\} \times \{0,1,\ldots,M\}$
 with $(n_1, n_2) \ne (m_1, m_2)$.
 First, suppose $n_i \ne 0$ for $i \in \{0,1\}$
 and $m_i \ne 0$ for $i \in \{0,1\}$.
 Set $u = a^{N - n_1} b^{M - n_2}$.
 Then $a^{n_1} b^{n_2} u \in L$.
 By the assumptions, $a^{m_1} b^{m_2} u \in L$
 if and only if $(m_1, m_2) + (N - n_1, M - n_2) = (M, N)$.
 So, if this is not the case, then $u$
 distinguishes both words.
 Otherwise,
 we must have $(n_1, n_2) + (N- m_1, M - m_2) \ne (M, N)$,
 for if not then $M - N = n_1 - m_1 = m_1 - n_1$.
 The last equality then gives $n_1 = m_1$
 and similarly we find $n_2 = m_2$. So, in that case
 with $v = a^{N - m_1} b^{M - m_2}$
 we have $a^{m_1} b^{m_2} v \in L$
 but $a^{n_1} b^{n_2} v \notin L$.
 Lastly, if at least one component is zero,
 we must distinguish more cases
 as
 $\{ a^{N + M}, b^{N + M} \}\subseteq L$,
 which are left out here due to
 space. 
 
 So, we have $\stc(L) \ge (N+1)(M+1) + 1$ (the additional one accounts for a trap state necessary, as we measure the state complexity in terms of complete automata).
 Hence, $\stc(L) \ge \frac{nm}{4} + 1$. \qed
\end{proof}

\section{Conclusion} As shown in  Table~\ref{tab:sc_known_results} and  Table~\ref{tab:sc_results_here},
 for shuffle and the upward and downward closure and interior operations,
 it is not known if the given upper bounds are tight.  Hence, this is an open problem.

 {\smallskip \noindent \footnotesize
 \textbf{Acknowledgement.} I thank the anonymous referees of~\cite{Hoffmann2021NISextended} (the extended version of~\cite{DBLP:conf/cai/Hoffmann19}) whose feedback also helped in the present work.
 I also thank the referees of the present work
 for critical and careful reading, and pointing
 out typos and parts that needed better explanation.
 
 }
\bibliographystyle{splncs04}
\bibliography{ms,perm} 

\begin{thebibliography}{YZS94}

\bibitem[BS73]{BrzozowskiS73}
J.~A. Brzozowski and I.~Simon.
\newblock Characterizations of locally testable events.
\newblock {\em Discrete Mathematics}, 4(3):243--271, 1973.

\bibitem[FH19]{DBLP:journals/jalc/Fernau019}
Henning Fernau and Stefan Hoffmann.
\newblock Extensions to minimal synchronizing words.
\newblock {\em J. Autom. Lang. Comb.}, 24(2-4):287--307, 2019.

\bibitem[GA08]{GomezA08}
A.~Cano G{\'{o}}mez and G.~I. Alvarez.
\newblock Learning commutative regular languages.
\newblock In Alexander Clark, Fran{\c{c}}ois Coste, and Laurent Miclet,
  editors, {\em {ICGI} 2008}, volume 5278 of {\em LNCS}, pages 71--83.
  Springer, 2008.

\bibitem[Hof]{Hoffmann2021NISextended}
S.~Hoffmann.
\newblock Commutative regular languages -- properties and state complexity.
\newblock {\em Information and Computation}.
\newblock (to appear).

\bibitem[Hof19]{DBLP:conf/cai/Hoffmann19}
Stefan Hoffmann.
\newblock Commutative regular languages - properties and state complexity.
\newblock In Miroslav Ciric, Manfred Droste, and Jean{-}{\'{E}}ric Pin,
  editors, {\em {CAI} 2019, Ni{\v{s}}, Serbia, June 30 - July 4, 2019,
  Proceedings}, volume 11545 of {\em LNCS}, pages 151--163. Springer, 2019.

\bibitem[Pin97]{DBLP:reference/hfl/Pin97}
Jean{-}{\'{E}}ric Pin.
\newblock Syntactic semigroups.
\newblock In Grzegorz Rozenberg and Arto Salomaa, editors, {\em Handbook of
  Formal Languages, Volume 1}, pages 679--746. Springer, 1997.

\bibitem[PS02]{PighizziniS02}
G.~Pighizzini and J.~Shallit.
\newblock Unary language operations, state complexity and jacobsthal's
  function.
\newblock {\em International Journal of Foundations of Computer Science},
  13(1):145--159, 2002.

\bibitem[YZS94]{YuZhuangSalomaa1994}
S.~Yu, Q.~Zhuang, and K.~Salomaa.
\newblock The state complexities of some basic operations on regular languages.
\newblock {\em Theoretical Computer Science}, 125(2):315--328, March 1994.

\end{thebibliography}


\begin{thebibliography}{10}
\providecommand{\url}[1]{\texttt{#1}}
\providecommand{\urlprefix}{URL }
\providecommand{\doi}[1]{https://doi.org/#1}

\bibitem{BrzozowskiJLRS16}
Brzozowski, J.A., Jir{\'{a}}skov{\'{a}}, G., Liu, B., Rajasekaran, A.,
  Szyku{\l}a, M.: On the state complexity of the shuffle of regular languages.
  In: C{\^{a}}mpeanu, C., Manea, F., Shallit, J.O. (eds.) {DCFS} 2016. LNCS,
  vol.~9777, pp. 73--86. Springer (2016)

\bibitem{BrzozowskiL12}
Brzozowski, J.A., Liu, B.: Quotient complexity of star-free languages.
  International Journal of Foundations of Computer Science  \textbf{23}(6),
  1261--1276 (2012)

\bibitem{BrzozowskiS73}
Brzozowski, J.A., Simon, I.: Characterizations of locally testable events.
  Discrete Mathematics  \textbf{4}(3),  243--271 (1973)

\bibitem{CamHab74}
Campbell, R.H., Habermann, A.N.: The specification of process synchronization
  by path expressions. In: Gelenbe, E., Kaiser, C. (eds.) Operating Systems OS.
  LNCS, vol.~16, pp. 89--102. Springer (1974)

\bibitem{DBLP:journals/jalc/CampeanuSY02}
C{\^{a}}mpeanu, C., Salomaa, K., Yu, S.: Tight lower bound for the state
  complexity of shuffle of regular languages. J. Autom. Lang. Comb.
  \textbf{7}(3),  303--310 (2002)

\bibitem{DBLP:journals/fuin/CastiglioneR12}
Castiglione, G., Restivo, A.: On the shuffle of star-free languages. Fundam.
  Informaticae  \textbf{116}(1-4),  35--44 (2012)

\bibitem{GaoMRY17}
Gao, Y., Moreira, N., Reis, R., Yu, S.: A survey on operational state
  complexity. Journal of Automata, Languages and Combinatorics  \textbf{21}(4),
   251--310 (2017)

\bibitem{GomezA08}
G{\'{o}}mez, A.C., Alvarez, G.I.: Learning commutative regular languages. In:
  Clark, A., Coste, F., Miclet, L. (eds.) {ICGI} 2008. LNCS, vol.~5278, pp.
  71--83. Springer (2008)

\bibitem{DBLP:journals/tcs/GruberHK07}
Gruber, H., Holzer, M., Kutrib, M.: The size of {Higman}-{Haines} sets. Theor.
  Comput. Sci.  \textbf{387}(2),  167--176 (2007)

\bibitem{DBLP:journals/fuin/GruberHK09}
Gruber, H., Holzer, M., Kutrib, M.: More on the size of {Higman}-{Haines} sets:
  Effective constructions. Fundam. Informaticae  \textbf{91}(1),  105--121
  (2009)

\bibitem{Hai69a}
Haines, L.H.: On free monoids partially ordered by embedding. Journal of
  Combinatorial Theory  \textbf{6},  94--98 (1969)

\bibitem{DBLP:journals/ita/Heam02}
H{\'{e}}am, P.: On shuffle ideals. {RAIRO} Theor. Informatics Appl.
  \textbf{36}(4),  359--384 (2002)

\bibitem{Hig52}
Higman, G.: Ordering by divisibility in abstract algebras. Proceedings of the
  London Mathematical Society (3)  \textbf{2}(7),  326--336 (1952)

\bibitem{Hoffmann2021NISextended}
Hoffmann, S.: Commutative regular languages -- properties and state complexity.
  Information and Computation (to appear)

\bibitem{DBLP:conf/cai/Hoffmann19}
Hoffmann, S.: Commutative regular languages - properties and state complexity.
  In: Ciric, M., Droste, M., Pin, J. (eds.) {CAI} 2019, Ni{\v{s}}, Serbia, June
  30 - July 4, 2019, Proceedings. LNCS, vol. 11545, pp. 151--163. Springer
  (2019)

\bibitem{DBLP:conf/dcfs/Hoffmann21}
Hoffmann, S.: Commutative regular languages with product-form minimal automata.
  In: Han, Y.S., Ko, S.K. (eds.) {DCFS} 2021, Proceedings. LNCS, Springer
  (2021)

\bibitem{Hoffmann2021}
Hoffmann, S.: State complexity of projection on languages recognized by
  permutation automata and commuting letters. In: {DLT} 2021, Porto, Portugal,
  August, 16-20, 2021, Proceedings (2021)

\bibitem{DBLP:conf/dlt/HospodarM20}
Hospod{\'{a}}r, M., Mlyn{\'{a}}r\v{c}ik, P.: Operations on permutation
  automata. In: Jonoska, N., Savchuk, D. (eds.) {DLT} 2020, Tampa, FL, USA, May
  11-15, 2020, Proceedings. Lecture Notes in Computer Science, vol. 12086, pp.
  122--136. Springer (2020)

\bibitem{KarandikarNS16}
Karandikar, P., Niewerth, M., Schnoebelen, P.: On the state complexity of
  closures and interiors of regular languages with subwords and superwords.
  Theoretical Computer Science  \textbf{610},  91--107 (2016)

\bibitem{DBLP:journals/dm/Leeuwen78}
van Leeuwen, J.: Effective constructions in well-partially- ordered free
  monoids. Discret. Math.  \textbf{21}(3),  237--252 (1978)

\bibitem{Mas70}
Maslov, A.N.: Estimates of the number of states of finite automata. Dokl. Akad.
  Nauk SSSR  \textbf{194}(6),  1266--1268 (1970)

\bibitem{DBLP:conf/mfcs/Marzurkiewicz75}
Mazurkiewicz, A.W.: Parallel recursive program schemes. In: Becv{\'{a}}r, J.
  (ed.) {MFCS} 1975, Proceedings. LNCS, vol.~32, pp. 75--87. Springer (1975)

\bibitem{McNaughton67}
McNaughton, R.: The loop complexity of pure-group events. Information and
  Control  \textbf{11}(1/2),  167--176 (1967)

\bibitem{DBLP:journals/fuin/Okhotin10}
Okhotin, A.: On the state complexity of scattered substrings and superstrings.
  Fundam. Informaticae  \textbf{99}(3),  325--338 (2010)

\bibitem{DBLP:journals/tcs/Perrot78}
Perrot, J.: Varietes de langages et operations. Theor. Comput. Sci.
  \textbf{7},  197--210 (1978)

\bibitem{PighizziniS02}
Pighizzini, G., Shallit, J.: Unary language operations, state complexity and
  jacobsthal's function. International Journal of Foundations of Computer
  Science  \textbf{13}(1),  145--159 (2002)

\bibitem{DBLP:conf/lata/Restivo15}
Restivo, A.: The shuffle product: New research directions. In: Dediu, A.,
  Formenti, E., Mart{\'{\i}}n{-}Vide, C., Truthe, B. (eds.) {LATA} 2015. LNCS,
  vol.~8977, pp. 70--81. Springer (2015)

\bibitem{DBLP:journals/cl/Riddle79}
Riddle, W.E.: An approach to software system behavior description. Comput.
  Lang.  \textbf{4}(1),  29--47 (1979)

\bibitem{Schutzenberger65a}
Sch{\"{u}}tzenberger, M.P.: On finite monoids having only trivial subgroups.
  Inf. Control.  \textbf{8}(2),  190--194 (1965)

\bibitem{Shaw78zbMATH03592960}
{Shaw}, A.C.: {Software descriptions with flow expressions.} {IEEE Trans.
  Softw. Eng.}  \textbf{4},  242--254 (1978)

\bibitem{YuZhuangSalomaa1994}
Yu, S., Zhuang, Q., Salomaa, K.: The state complexities of some basic
  operations on regular languages. Theoretical Computer Science
  \textbf{125}(2),  315--328 (Mar 1994)

\end{thebibliography}
\end{document}